\newtheorem{theorem}{Theorem}[section]
\newtheorem{proposition}[theorem]{Proposition}
\newtheorem{lemma}[theorem]{Lemma}
\newtheorem{corollary}[theorem]{Corollary}
\newcommand{\hilbert}{{\mathcal H}}      
\newcommand{\fock}{{\mathcal F}}         
\newcommand{\N}{\mathbbm{N}}
\newcommand{\R}{\mathbbm{R}}
\newcommand{\C}{\mathbbm{C}}
\newcommand{\id}{\mathbbm{1}}            
\DeclareMathOperator{\spann}{span}       
\DeclareMathOperator{\tr}{tr}            
\begin{document}
\title{Boundedness properties of fermionic operators}
\author{Peter Otte \\ Fakult\"at f\"ur Mathematik \\ Ruhr-Universit\"at Bochum\\
Germany}
\maketitle
\begin{abstract}
The fermionic second quantization operator $d\Gamma(B)$ is shown to be
bounded by a power $N^{s/2}$ of the number operator $N$ given that the operator
$B$ belongs to the $r$-th von Neumann-Schatten class, $s=2(r-1)/r$. Conversely,
number operator estimates for $d\Gamma(B)$ imply von Neumann-Schatten conditions
on $B$. Quadratic creation and annihilation operators are treated as well.
\end{abstract}

\section{Introduction\label{introduction}}
Operators that satisfy the canonical anti-commutation relations (CAR) are
necessarily bounded. One may therefore ask what can be said about more
complicated operators, say, quadratic expressions in creation and annihilation
operators. Perhaps the most prominent such operator is $d\Gamma(B)$, the functor
of second quantization.

Suppose, we are given a Fock representation of the CAR over a separable complex
Hilbert space $L$. With the usual annihilation and creation operators $a(f)$ and
$a^\dagger(f)$ we define for a bounded operator $B$ on $L$ its second
quantization through
\begin{equation}\label{intro_dGamma}
  d\Gamma(B):=\sum_j a^\dagger(Be_j)a(\bar e_j)
\end{equation}
where $\{e_j\}$ is a complete orthonormal system (ONS) in $L$. The details of
this construction are briefly described in Section \ref{car}. We want to
compare $d\Gamma(B)$ with the number operator
$$
  N := d\Gamma(\id) =\sum_j a^\dagger(e_j)a(\bar e_j)
$$
There are two types of theorems. The first say, roughly, the more bounded $B$ is
the smaller $d\Gamma(B)$ is. More precisely, Theorem \ref{dGamma} tells us
\begin{equation}\label{intro_estimate}
  d\Gamma(B)^*d\Gamma(B) \leq 
\begin{cases}
  \|B\|_r^2 N^s + \|B\|_2^2\id  &  1< r< 2\\
  \|B\|_r^2 N^s                 &  r=1,\ 2\leq r\leq\infty
\end{cases}
\end{equation}
whenever $B$ is in the von Neumann-Schatten class $B_r(L)$, $1\leq r\leq
\infty$, and $s=\frac{2(r-1)}{r}$. The proof is based upon a thorough analysis of
\eqref{intro_dGamma} and uses H\"older and Cauchy-Schwarz inequalities for
operators. The literature provides estimates as in \eqref{intro_estimate} only
for the special cases $s=0$ ($r=1$) and $s=2$ ($r=\infty$). See Carey and
Ruijsenaars \cite{CareyRuijsenaars1987} and Grosse and Langmann
\cite{GrosseLangmann1992}. The $N^2$ bound holds also for bosons and looks
like what one would intuitively expect, namely, bound a quadratic operator by
another quadratic operator. However, thanks to the fermionic character, the
estimates can be improved upon to yield results for all $0\leq s\leq 2$.

In the second part, Theorem \ref{converse_dGamma} answers the question as to how
boundedness properties of $d\Gamma(B)$ affect the corresponding operator $B$
which is only interesting for $\dim L=\infty$. Its proof uses only elementary
calculations. For $s>0$ it turns out that in a way the bound
\eqref{intro_estimate} is almost sharp. That is to say, an estimate with $N^s$
implies $B\in B_{r+\varepsilon}(L)$ for all $\varepsilon>0$. For $s=0$ we may
even forget about $\varepsilon$ in that an estimate with $s=0$ implies $B\in B_1(L)$
which was conjectured by Carey and Ruijsenaars
\cite{CareyRuijsenaars1987} and Ottesen \cite{Ottesen1995}. It is an open
question whether one may drop $\varepsilon$ altogether.

All theorems proved for $d\Gamma(B)$ have analogs for the quadratic
annihilation and creation operators
\begin{equation}\label{intro_DeltaDeltaPlus}
  \Delta(A):=\sum_j a(Ae_j)a(\bar e_j),\
  \Delta^+(C):=\sum_j a^\dagger(Ce_j)a^\dagger(\bar e_j)
\end{equation}
Theorems \ref{Delta} and \ref{DeltaPlus} present number operator estimates in
the spirit of \eqref{intro_estimate} for $1\leq r\leq 2$ since $\Delta(A)$ and
$\Delta^+(C)$ are well-defined only for $A,C\in B_2(L)$. Hence, the $N^2$
estimates from the literature, see \eqref{literature_DeltaDeltaPlus}, are far
from optimal. The proofs parallel that for $d\Gamma(B)$. Contrary to
that, the converse Theorems \ref{converse_Delta} and \ref{converse_DeltaPlus}
are not elementary but employ a determinant formula for fermionic Gaussians and
a theorem from complex analysis. Their statement is essentially the same as for
$d\Gamma(B)$ except for the case $r=1$ which also has an $\varepsilon>0$.

\section{The CAR and second quantization\label{car}}
We sketch the necessary background from fermionic Fock space
theory. Presentations similar in spirit can be found in
\cite{CareyRuijsenaars1987} and \cite{Ottesen1995}. We formulate the CAR for
operator-valued functionals. To this end, let $L$ be a complex Hilbert space
equipped with a conjugation $f\mapsto \bar f$. Throughout, we will assume $L$ to
be separable. Let further $\fock$ be another complex Hilbert space. We call a
linear map from $L$ into the linear operators on $\fock$
$$
  f\in L,\ f\mapsto c(f)
$$
an operator-valued functional. The CAR need two such functionals, $a$ and
$a^\dagger$, which are assumed to have a common dense domain of definition
$D\subset\fock$ and
$$
  a(f)D\subset D,\ a^\dagger(f)D\subset D
$$
These operators are said to give a representation of the CAR if for all $f,g\in
L$ on $D$
\begin{gather}
  \{a(f),a(g)\} = 0 = \{a^\dagger(f),a^\dagger(g)\} \label{car1} \\
  \{a(f),a^\dagger(g)\} = (\bar f, g)\id \label{car2}
\end{gather}
where the curly brackets denote the anti-commutator. We further require the
unitarity condition
\begin{equation}\label{unitarity}
  a(f)^* = a^\dagger(\bar f)
\end{equation}
Properties \eqref{car1} through \eqref{unitarity} imply
\begin{equation}\label{projection}
  (a^\dagger(f)a(\bar f))^2 = \|f\|^2 a^\dagger(f)a(\bar f) \ \text{and}\
  (a^\dagger(f)a(\bar f))^* = a^\dagger(f)a(\bar f)
\end{equation}
In particular, $a^\dagger(f)a(\bar f)$ is an orthogonal projection for
$\|f\|=1$ and thus
\begin{equation}\label{projection_boundedness}
  0\leq a(f)^*a(f)\leq \|f\|^2,\ 0 \leq a^\dagger(f)^*a^\dagger(f)\leq \|f\|^2
\end{equation}
We have the fundamental boundedness result.

\begin{theorem}\label{boundedness}
The operators $a(f)$ and $a^\dagger(f)$ are bounded on their domain of
definition and therefore extend to bounded operators on all of $\fock$. We have
\begin{equation}\label{boundedness1}
  \|a(f)\| = \|a^\dagger(f)\| = \|f\|
\end{equation}
Hence, the maps $f\mapsto a(f)$, $f\mapsto a^\dagger(f)$ are continuous and
injective.
\end{theorem}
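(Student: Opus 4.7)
The plan is to combine the bound already recorded in \eqref{projection_boundedness} with the projection identity \eqref{projection} and the anti-commutator \eqref{car2} to nail the norm down exactly.

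First, the upper bound is essentially free: for any $\psi\in D$, \eqref{projection_boundedness} gives $\|a(f)\psi\|^2 = \langle\psi,a(f)^*a(f)\psi\rangle\le\|f\|^2\|\psi\|^2$, and similarly $\|a^\dagger(f)\psi\|\le\|f\|\|\psi\|$. Since both functionals are linear and bounded on the dense subspace $D\subset\fock$, they extend uniquely by continuity to bounded operators on all of $\fock$, still satisfying the same norm bounds.

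To get the matching lower bound I would exploit \eqref{projection} with $f$ replaced by $\bar f$ together with unitarity \eqref{unitarity}. This shows that the positive operator
\[
    P := a(f)^*a(f) = a^\dagger(\bar f)a(f)
\]
is self-adjoint and satisfies $P^2 = \|f\|^2 P$, so its spectrum is contained in $\{0,\|f\|^2\}$. Consequently, either $P=0$ (forcing $a(f)\equiv 0$ on $D$) or $\|P\|=\|f\|^2$, in which case $\|a(f)\|^2=\|P\|=\|f\|^2$.

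The main obstacle is ruling out the first alternative for $f\ne 0$; this is where the nontriviality of the Fock space has to enter. Setting $g=\bar f$ in \eqref{car2} produces
\[
    a(f)\,a^\dagger(\bar f) + a^\dagger(\bar f)\,a(f) = (\bar f,\bar f)\,\id = \|f\|^2\,\id.
\]
Were $a(f)$ to vanish on $D$, both summands on the left would vanish and we would obtain $\|f\|^2\,\id=0$; since $\fock\ne\{0\}$, this forces $f=0$. Hence $\|a(f)\|=\|f\|$ for every $f\in L$. The identity $\|a^\dagger(f)\|=\|f\|$ then drops out of \eqref{unitarity}, namely $\|a^\dagger(f)\|=\|a(\bar f)^*\|=\|a(\bar f)\|=\|\bar f\|=\|f\|$, using that the conjugation on $L$ is isometric. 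Continuity of $f\mapsto a(f)$ and $f\mapsto a^\dagger(f)$ is then immediate from linearity combined with the exact norm equality, and injectivity is the statement $a(f)=0\Leftrightarrow\|f\|=0\Leftrightarrow f=0$.
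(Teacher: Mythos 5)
Your proof is correct, and it proceeds exactly along the lines the paper sets up: the paper states Theorem \ref{boundedness} without proof, but the identities \eqref{projection} and \eqref{projection_boundedness} displayed immediately before it are precisely the ingredients you use, with the upper bound coming from \eqref{projection_boundedness} and the attainment of the norm from the fact that $a(f)^*a(f)$ is $\|f\|^2$ times a projection which the CAR \eqref{car2} forces to be nonzero when $f\neq 0$. Your ordering is also sound: establishing boundedness first makes the subsequent adjoint and spectral manipulations legitimate.
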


In what follows, we will work exclusively within the Fock representation. It
features a special vector, the vacuum $\Omega\in\fock$, $\|\Omega\|=1$. It is
annihilated by the $a(f)$'s
\begin{equation}\label{vacuum}
  a(f)\Omega = 0\ \text{for all}\ f\in L
\end{equation}
and cyclic for the $a^\dagger(f)$'s, i.e. 
\begin{equation}\label{cyclicity}
  \overline{\spann\{a^\dagger(f_{j_n})\cdots a^\dagger(f_{j_1})\Omega\mid
  n\in\N_0\}} = \fock
\end{equation}
Consequently, $a(f)$ is called annihilation operator and $a^\dagger(f)$ creation
operator. $\fock$ is the Fock space. Because of the vacuum the Fock space has a
special structure which can be described best through the $n$-particle spaces
\begin{equation}\label{n_particle_space}
  \fock^{(n)} := \overline{\spann\{ a^\dagger(f_n)\cdots
  a^\dagger(f_1)\Omega\}},\   n\geq 0
\end{equation}
It is clear that $\fock$ is built from these subspaces.

\begin{theorem}\label{fock_space}
The Fock space $\fock$ is the (completed) orthogonal sum of the $n$-particle
spaces $\fock^{(n)}$
$$
  \fock = \bigoplus_{n=0}^\infty\fock^{(n)}\ \text{with}\
  \fock^{(m)}\perp\fock^{(n)},\ m\neq n
$$
\end{theorem}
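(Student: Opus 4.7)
The plan is to handle the two parts of the statement separately: first, the orthogonality $\fock^{(m)}\perp\fock^{(n)}$ for $m\neq n$, and second, the density of $\sum_n\fock^{(n)}$ in $\fock$.

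For the orthogonality, by continuity of the inner product and the definition of $\fock^{(n)}$ as a closure, it is enough to show
$$
  (a^\dagger(f_m)\cdots a^\dagger(f_1)\Omega,\ a^\dagger(g_n)\cdots a^\dagger(g_1)\Omega) = 0
$$
for all choices $f_i,g_j\in L$ whenever $m\neq n$; by taking complex conjugates I may assume $m<n$. Using unitarity \eqref{unitarity} to move the left factor across, this inner product equals
$$
  (\Omega,\ a(\bar f_1)\cdots a(\bar f_m)\, a^\dagger(g_n)\cdots a^\dagger(g_1)\Omega).
$$
The key step is to repeatedly apply the CAR \eqref{car2} to push each annihilation operator to the right, past all creation operators, producing a linear combination of normal-ordered expressions $c\cdot a^\dagger(h_k)\cdots a^\dagger(h_1)\, a(\bar k_l)\cdots a(\bar k_1)$ where the scalar $c$ collects the anti-commutator contractions. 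Each such contraction preserves the net count $k-l$, which here starts at $n-m\geq 1$, so in every resulting term $k-l\geq 1$. Applied to $\Omega$: any term with $l\geq 1$ vanishes by \eqref{vacuum}, while a term with $l=0$ has $k\geq 1$ and contributes $(\Omega,\,a^\dagger(h_k)\cdots a^\dagger(h_1)\Omega)$, which is zero because moving the leftmost creation operator over via \eqref{unitarity} produces $a(\bar h_k)\Omega=0$.

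The density part is then essentially immediate from cyclicity. Equation \eqref{cyclicity} says that finite linear combinations of vectors $a^\dagger(f_{j_n})\cdots a^\dagger(f_{j_1})\Omega$, $n\in\N_0$, are dense in $\fock$. Every such vector lies in the corresponding $\fock^{(n)}$, so the algebraic sum $\sum_{n=0}^\infty\fock^{(n)}$ is dense in $\fock$. Combined with the orthogonality just established, this yields the desired orthogonal Hilbert-space direct sum decomposition.

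The only nontrivial step is the normal-ordering manipulation inside the orthogonality argument, and even that is just combinatorial bookkeeping, carried out by induction on $m$ using \eqref{car2}; no deeper input is required.
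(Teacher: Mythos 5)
Your proof is correct. The paper states Theorem \ref{fock_space} as standard background without giving a proof, and your argument --- orthogonality of monomial vectors via normal ordering with \eqref{car2}, \eqref{unitarity}, and \eqref{vacuum}, plus density from cyclicity \eqref{cyclicity} --- is exactly the standard one it implicitly relies on.
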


In order to avoid running into technical difficulties we will perform all
calculations on the subspace of finite particle numbers
\begin{equation}\label{F0}
  \fock_0 := \spann\{ \Phi \mid \Phi\in\fock^{(n)},\ n\in\N_0\}
\end{equation}
Creation and annihilation operators are fully understood by Theorem
\ref{boundedness}. The next more complicated operators are quadratic expressions
in creators and annihilators. Such quadratic operators are used in second
quantization as well as in constructing central extensions of certain Lie
algebras. There are different methods of introducing them. Here we define them
quite straightforwardly via the following series
\begin{gather}
  d\Gamma(B)  := \sum_j a^\dagger(Be_j)a(\bar e_j) \label{series_dGamma} \\
  \Delta(A)   := \sum_j a(Ae_j)a(\bar e_j),\ 
  \Delta^+(C) := \sum_j a^\dagger(Ce_j)a^\dagger(\bar e_j) \label{series_DeltaDeltaPlus}
\end{gather}
where $\{e_j\}$ is a complete ONS in $L$ and $A,B,C$ are linear operators on
$L$. The operator $d\Gamma(B)$ gives the functor of second quantization. When
$\dim L<\infty$ there is no problem of convergence. For general separable $L$
well-definedness can be shown under certain conditions at least on $\fock_0$.

\begin{theorem}\label{well-definedness}
Let $B:L\to L$ be bounded. Then, $d\Gamma(B)$ from
\eqref{series_dGamma} is well-defined on $\fock_0$ and
$d\Gamma(B)^*=d\Gamma(B^*)$. When $B\geq 0$ so is $d\Gamma(B)\geq 0$.
Furthermore, let $A,C:L\to L$ be Hilbert-Schmidt operators with $A^T=-A$ and
$C^T=-C$ where $A^T:=\bar A^*$ is the transpose. Then, $\Delta(A)$ and
$\Delta^+(C)$ from \eqref{series_DeltaDeltaPlus} are well-defined on $\fock_0$
and satisfy $\Delta(A)^*=\Delta^+(A^*)$.
\end{theorem}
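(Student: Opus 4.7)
The plan is to reduce everything to finite computations on a set of algebraic generators of $\fock_0$ and invoke the CAR systematically. Because $\fock_0$ is the linear span of vectors of the form $\Phi = a^\dagger(f_n)\cdots a^\dagger(f_1)\Omega$, it is enough to check on each such $\Phi$ that the defining series converges and to write the action in closed form. The only CAR move I will need repeatedly is $a(\bar e_j)a^\dagger(f) = -a^\dagger(f)a(\bar e_j) + (e_j,f)\id$ together with $a(\bar e_j)\Omega=0$.

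For $d\Gamma(B)$ I would push $a(\bar e_j)$ all the way to the right through the $n$ creators in $\Phi$, producing a telescoping sum of $n$ terms with signs $(-1)^{n-k}$ and scalar factors $(e_j,f_k)$. Summing on $j$ collapses $\sum_j (e_j,f_k)Be_j$ to $Bf_k$ by Parseval and linearity of $a^\dagger$; restoring the factor to the $k$-th slot cancels the sign and yields
\[
  d\Gamma(B)\Phi = \sum_{k=1}^n a^\dagger(f_n)\cdots a^\dagger(Bf_k)\cdots a^\dagger(f_1)\Omega.
\]
This explicit finite sum gives well-definedness for any bounded $B$ and is manifestly independent of the ONS $\{e_j\}$. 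Positivity for $B\geq 0$ then falls out from the identification $\fock^{(n)}\cong\Lambda^n L$: the above formula realizes $d\Gamma(B)$ as $\sum_k \id\otimes\cdots\otimes B\otimes\cdots\otimes\id$, which is $\geq 0$ on each $n$-particle sector and therefore on $\fock_0$.

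For $\Delta(A)$ and $\Delta^+(C)$ the same strategy applies, but now two operators must traverse the creator string, producing a double sum over $j$ of signed pair contractions or insertions. After pushing through, the surviving summands contain scalar factors of the type $\sum_j (Ae_j,g_1)(e_j,g_2)$ (and analogous sums with $\bar e_j$ in place of $e_j$ for $\Delta^+$), which are absolutely bounded by $\|A\|_2\|g_1\|\|g_2\|$ by Cauchy-Schwarz and the definition of the Hilbert-Schmidt norm; this is exactly where the $B_2$ hypothesis is used. The conditions $A^T=-A$ and $C^T=-C$ are essentially cost-free: the anti-commutation of the two creators (respectively annihilators) kills the contribution of the symmetric part with respect to the transpose, so only the antisymmetric component survives and imposing $A^T=-A$, $C^T=-C$ loses nothing.

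The adjoint relations are verified by termwise formal adjoints using \eqref{unitarity}, namely $(a^\dagger(Be_j)a(\bar e_j))^* = a^\dagger(e_j)a(\overline{Be_j})$, and then showing that expanding both $d\Gamma(B)^*$ and $d\Gamma(B^*)$ in the ONS $\{e_j\}$ yields the same double sum $\sum_{j,k}\overline{(e_k,Be_j)}\,a^\dagger(e_j)a(\bar e_k)$; the calculation for $\Delta(A)^*=\Delta^+(A^*)$ is entirely parallel. The main obstacle I anticipate is not conceptual or analytic but notational: keeping track of the conjugation on $L$, the transposition $A^T=\bar A^*$, and the basis expansions so that the two sides of each adjoint identity can be recognised as the same double sum, and organising the double-sum manipulations for $\Delta,\Delta^+$ so that the Hilbert-Schmidt bound is applied to the correct index before any interchange of summation.
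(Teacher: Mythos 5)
The paper itself states Theorem \ref{well-definedness} as background and gives no proof (it refers to \cite{CareyRuijsenaars1987} and \cite{Ottesen1995}), so there is no in-paper argument to compare against; judged on its own merits, your proposal handles $d\Gamma(B)$ correctly but has a genuine gap in the $\Delta^+(C)$ (and, indirectly, the $\Delta(A)$) part. For $\Delta^+(C)=\sum_j a^\dagger(Ce_j)a^\dagger(\bar e_j)$ nothing ``traverses the creator string'': both factors are creators, so applied to $\Phi=a^\dagger(f_n)\cdots a^\dagger(f_1)\Omega$ the series does not collapse to a finite sum of contractions, and the scalar factors $\sum_j(Ae_j,g_1)(e_j,g_2)$ you invoke simply do not arise. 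What has to be proved is norm convergence of the partial sums $\sum_{j\le M} a^\dagger(Ce_j)a^\dagger(\bar e_j)\Phi$ in $\fock^{(n+2)}$; the naive triangle inequality gives only $\sum_j\|Ce_j\|\,\|e_j\|\,\|\Phi\|$, which needs a trace-class-type condition, strictly stronger than Hilbert--Schmidt. To get by with $C\in B_2(L)$ one must expand $\bigl\|\sum_j a^\dagger(Ce_j)a^\dagger(\bar e_j)\Phi\bigr\|^2$ as a double sum over $j,j'$ and use the CAR/orthogonality (equivalently, identify $\Delta^+(C)\Omega$ with the antisymmetric Hilbert--Schmidt kernel of $C$, of norm comparable to $\|C\|_2$). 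This computation is the whole content of the Hilbert--Schmidt hypothesis for $\Delta^+$, and it is missing.

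Relatedly, you have mislocated where $B_2$ enters for $\Delta(A)$: the bound $\sum_j|(Ae_j,g_1)||(e_j,g_2)|\le\|A^*g_1\|\,\|g_2\|\le\|A\|\,\|g_1\|\,\|g_2\|$ follows from Cauchy--Schwarz using only the \emph{operator} norm, so on elementary product vectors $\Delta(A)$ exists for merely bounded $A$ and your argument never actually uses $\|A\|_2$. The Hilbert--Schmidt condition is needed because $\fock_0$ as defined in \eqref{F0} is the span of the full closed subspaces $\fock^{(n)}$, not just of finite products of creators; to pass from the algebraic generators to all of $\fock^{(n)}$ you need a norm bound of the restriction $\Delta(A)\mid\fock^{(n)}$, and on $\fock^{(2)}\cong\Lambda^2L$ the map $f\wedge g\mapsto(\bar g,Af)-(\bar f,Ag)$ is bounded precisely when $A\in B_2(L)$ (this is also the necessity remark after the theorem). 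So your reduction ``it is enough to check on each such $\Phi$'' needs to be supplemented by sector-wise boundedness estimates, and for $\Delta,\Delta^+$ those estimates are exactly where the $B_2$ hypothesis does its work. The $d\Gamma(B)$ part (explicit Leibniz-type formula, positivity via $\sum_k\id\otimes\cdots\otimes B\otimes\cdots\otimes\id$ on $\Lambda^nL$, termwise adjoints) is fine, modulo the same small remark that one should note the bound $n\|B\|$ on $\fock^{(n)}$ to cover all of $\fock_0$.
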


We will not touch upon the question as to whether the domain of definition can
be enlarged. However, the conditions imposed on $A,B,C$ are in a way necessary.
For $d\Gamma(B)$ to exist on the entire one-particle space $\fock^{(1)}$ it is
necessary that $B$ is bounded. Likewise, in order that $\Delta(A)$ exists on the
entire two-particle subspace $\fock^{(2)}$ it is necessary that $A$ is
Hilbert-Schmidt. And finally, $\Delta^+(C)$ is defined on the vacuum only if $C$
is Hilbert-Schmidt.

We will need to know what $d\Gamma(B)$, $\Delta(A)$, and
$\Delta^+(C)$ do with the $n$-particle spaces
\begin{equation}\label{mapping_properties}
  d\Gamma(B):\fock^{(n)}\to\fock^{(n)},\
  \Delta(A):\fock^{(n)}\to\fock^{(n-2)},\
  \Delta^+(C):\fock^{(n)}\to\fock^{(n+2)}
\end{equation}
That is why $\Delta(A)$ and $\Delta^+(C)$ are called quadratic annihilation and
creation operators, respectively. $d\Gamma(B)$ preserves the number of
particles. Of all the interesting algebraic properties we only need one
commutator
\begin{equation}\label{comAC}
  [\Delta(A),\Delta^+(C)]   = -4d\Gamma(CA) + 2\tr AC\cdot\id
\end{equation}
By taking $B=\id$ we obtain the particle number operator or number operator for
short
$$
  N:=d\Gamma(\id) = \sum_j a^\dagger(e_j)a(\bar e_j)
$$
We will use the commutators
$$
  [N,a(f)] = -a(f),\ [N,a^\dagger(f)] = a^\dagger(f)
$$
As an operator on the Fock space $N$ has a very simple structure
\begin{equation}\label{restriction_number_operator}
  N\Phi = n\Phi,\ \Phi\in\fock^{(n)}
\end{equation}
which justifies the naming. Moreover, $N$ is essentially self-adjoint on
$\fock_0$ and $N\geq 0$. Since $N$ as well as its functions are just multiples
of the identity operator on each $\fock^{(n)}$ they commute with number
preserving operators.

\section{Number operator estimates\label{estimates}}
We want to estimate $d\Gamma(B)$, $\Delta(A)$, and $\Delta^+(C)$ by the number
operator $N$. The proofs usually rely on manipulating series, which are infinite
when $\dim L=\infty$. This can always be justified by standard arguments based
upon partial sums. For the sake of the presentation's clarity we will not carry
this out. Furthermore, we write $B_r(L)$ for the $r$-th von Neumann-Schatten
class and $B_r^-(L)$ for the subset of skew-symmetric operators $A^T=-A$.
Finally, for $1\leq r<\infty$ we will employ the singular value decomposition
\begin{equation}\label{singular_value}
  A = \sum_j \mu_j (e_j, \cdot)f_j
\end{equation} 
with singular values $\mu_j\geq 0$ and ONS's $\{e_j\}$ and $\{f_j\}$. When not
explicitly referring to \eqref{singular_value} we mean $\{e_j\}$ to be any ONS.

To begin with, we cite a Jensen type inequality for operators. It goes back to
Bhagwat and Subramanian \cite{BhagwatSubramanian1978}. See also
\cite{VasudevaSingh2008} and \cite{MondPevcaric1993}.

\begin{proposition}\label{jensen}
Let $w_j\in\R$, $w_j\geq 0$ for $j=1,\ldots,n$. Furthermore, let
$c_j:\hilbert\to\hilbert$ be bounded non-negative operators on a Hilbert space
$\hilbert$. Then, for all $1\leq p \leq q<\infty$
$$
   \Big( \sum_{j=1}^n w_j c_j^p\Big)^{\frac{1}{p}}
     \leq w^{\frac{1}{p}-\frac{1}{q}}
          \Big( \sum_{j=1}^n w_j c_j^q\Big)^{\frac{1}{q}}
$$
\end{proposition}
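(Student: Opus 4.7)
The plan is to reduce to the scalar Jensen inequality by invoking the Löwner--Heinz theorem, which provides both operator concavity of $x\mapsto x^t$ for $0\le t\le 1$ and operator monotonicity of $x\mapsto x^{1/p}$ for $p\ge 1$. Write $w:=\sum_{j=1}^n w_j$ (which I assume is the meaning of the symbol $w$ in the statement) and set $\lambda_j:=w_j/w$, so that $\sum\lambda_j=1$. Introduce the exponent $t:=p/q\in(0,1]$.

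First I would establish the unrooted form
$$
  \sum_{j=1}^n w_j c_j^p \;\leq\; w^{1-p/q}\Bigl(\sum_{j=1}^n w_j c_j^q\Bigr)^{p/q}.
$$
To get this, set $A_j:=c_j^q\ge 0$ so that $c_j^p=A_j^t$, and apply the operator Jensen inequality for the concave function $x\mapsto x^t$ on $[0,\infty)$: since $\{\lambda_j\}$ is a probability vector,
$$
  \sum_j \lambda_j A_j^t \;\leq\; \Bigl(\sum_j \lambda_j A_j\Bigr)^t.
$$
Multiplying by $w$ and substituting back yields the display above.

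Second, I take $p$-th roots on both sides. Since $1/p\in(0,1]$, the map $x\mapsto x^{1/p}$ is operator monotone by Löwner--Heinz, so the inequality is preserved, giving
$$
  \Bigl(\sum_j w_j c_j^p\Bigr)^{1/p} \;\leq\; w^{1/p-1/q}\Bigl(\sum_j w_j c_j^q\Bigr)^{1/q},
$$
which is the claim.

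The only real obstacle is the operator Jensen step: for scalars it is just Jensen applied to the convex function $x\mapsto x^{q/p}$, but for non-commuting operators one needs operator concavity of $x\mapsto x^t$ on the positive cone (equivalently, the Löwner--Heinz theorem), and this is not elementary. Once that is invoked, everything else is algebraic manipulation. I would cite the Löwner--Heinz theorem (or the Bhagwat--Subramanian paper referenced in the text) rather than reprove it.
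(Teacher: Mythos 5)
The paper gives no proof of this proposition at all --- it is cited from Bhagwat and Subramanian --- so there is nothing internal to compare against; what you have written is essentially the standard argument from that literature, and it is correct. Your reading $w=\sum_{j=1}^n w_j$ is the intended one (this is confirmed by how the proposition is used in the proof of Corollary \ref{hoelder}), the reduction to the unrooted form via the $n$-term operator Jensen inequality for the operator concave function $x\mapsto x^{p/q}$ is sound (the $n$-term version follows from the two-operator definition of operator concavity by a routine induction, which you should at least mention), and taking $p$-th roots via operator monotonicity of $x\mapsto x^{1/p}$ correctly produces the exponent $w^{1/p-1/q}$. One attribution quibble: the L\"owner--Heinz theorem proper is the operator monotonicity of $x\mapsto x^s$ for $0\le s\le 1$; the operator concavity of $x\mapsto x^t$ on $[0,\infty)$ is a companion result (every operator monotone function on $[0,\infty)$ is operator concave), so your first step really invokes a second, separately named theorem. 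Both are standard and citable, so this does not affect the validity of the proof, but the citation should be precise.
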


A simple consequence is a H\"older type inequality.

\begin{corollary}\label{hoelder}
Let $\mu_j\in \R$, $\mu_j\geq 0$, for $j=1,\ldots,n$. Let furthermore
$c_j:\hilbert\to\hilbert$ be bounded non-negative operators on a Hilbert space
$\hilbert$. Then, for $p,q\geq 1$, $\frac{1}{p}+\frac{1}{q}=1$
$$
  \sum_{j=1}^n \mu_j c_j
    \leq \Big(\sum_{j=1}^n \mu_j^p \Big)^{\frac{1}{p}}
          \Big(\sum_{j=1}^n c_j^q\Big)^{\frac{1}{q}}
$$
\end{corollary}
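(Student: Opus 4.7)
The plan is to mimic the classical one-line derivation of scalar H\"older from scalar Jensen, using the operator Jensen inequality already provided by Proposition \ref{jensen}. The key observation is that a well-chosen splitting of the product $\mu_j c_j$ matches the form of Proposition \ref{jensen} with Jensen-exponents $1$ and $q$.

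Concretely, I would set
$$
  w_j := \mu_j^p,\qquad b_j := \mu_j^{1-p} c_j
$$
(dropping indices with $\mu_j=0$, since they contribute zero on both sides). Then $w_j\geq 0$, and $b_j$ is a non-negative scalar times a non-negative operator, hence $b_j\geq 0$, while $w_j b_j=\mu_j c_j$. So $\sum_j\mu_j c_j=\sum_j w_j b_j$. Applying Proposition \ref{jensen} to $(w_j,b_j)$ with exponents $1\leq 1\leq q$ then yields
$$
  \sum_j \mu_j c_j \;\leq\; w^{1-1/q}\Big(\sum_j w_j b_j^q\Big)^{1/q},\qquad w=\sum_j\mu_j^p
$$
The remaining step is bookkeeping using the H\"older relation $p+q=pq$: the prefactor collapses to $w^{1/p}=(\sum_j\mu_j^p)^{1/p}$, while the scalar exponent inside the sum satisfies $p+(1-p)q=p+q-pq=0$, so $w_j b_j^q=c_j^q$. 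Combining these gives exactly the claimed inequality.

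I do not anticipate a genuine obstacle, since the real analytic work (validating a Jensen-type bound at the operator level, which is non-trivial because general bounded operators do not commute) has already been absorbed into Proposition \ref{jensen}. The one point worth a moment's thought is that in $b_j^q$ the scalar and operator factors separate cleanly, i.e.\ $b_j^q=\mu_j^{(1-p)q}c_j^q$; this is immediate from the functional calculus, because $b_j$ is a non-negative scalar multiple of a single non-negative operator.
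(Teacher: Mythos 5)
Your proof is correct and is essentially the paper's own argument: the paper likewise applies Proposition \ref{jensen} with exponents $1$ and $q$ to the weights $\mu_j^p$ and the rescaled operators $\mu_j^{1-p}c_j$ (written there as $\mu_j c_j = \frac{\mu_j^p}{\mu_j^{p-1}}c_j$, after discarding the indices with $\mu_j=0$), and the exponent bookkeeping via $p+q=pq$ is identical.
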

\begin{proof}
First of all, we rewrite the Jensen inequality in \ref{jensen} for a special case
$$
  \sum_{j=1}^n \mu_j c_j 
   \leq \Big(\sum_{j=1}^n \mu_j \Big)^{1-\frac{1}{q}} 
        \Big(\sum_{j=1}^n \mu_j c_j^q\Big)^{\frac{1}{q}}
$$
Without loss of generality we may assume $\mu_j>0$ for $j=1,\ldots,n$. Let
$\frac{1}{p}+\frac{1}{q}=1$. Then,
\begin{equation*}
  \sum_{j=1}^n \mu_j c_j
     = \sum_{j=1}^n \frac{\mu_j^p}{\mu_j^{p-1}} c_j 
     \leq \Big( \sum_{j=1}^n \mu_j^p\Big)^{1-\frac{1}{q}}
           \Big(\sum_{j=1}^n \frac{\mu_j^p}{\mu_j^{(p-1)q}}c_j^q\Big)^{\frac{1}{q}} 
     = \Big(\sum_{j=1}^n\mu_j^p\Big)^{\frac{1}{p}}
        \Big(\sum_{j=1}^n c_j^q\Big)^{\frac{1}{q}}
\end{equation*}
which is H\"older's inequality.
\end{proof}

This allows us to treat a very special case. 

\begin{lemma}\label{basic_estimate}
Let $\lambda_j\geq 0$. Assume
$$
  \Lambda_p := \Big(\sum_j \lambda_j^p\Big)^{\frac{1}{p}}<\infty,\ 
    \text{for}\ 1\leq p<\infty\ 
  \text{or}\
  \Lambda_\infty := \sup_j\lambda_j<\infty
$$
Then, for $\frac{1}{p}+\frac{1}{q}=1$ and with the understanding
$\frac{1}{\infty}=0$
$$
  \sum_j \lambda_j a^\dagger(e_j) a(\bar e_j)
    \leq \Lambda_p  N^{\frac{1}{q}}
$$
\end{lemma}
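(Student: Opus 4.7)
The plan is to exploit the fact that each operator $c_j := a^\dagger(e_j) a(\bar e_j)$ is an orthogonal projection (since $\|e_j\|=1$), so in particular $c_j^q = c_j$ for every $q \geq 1$ by equation \eqref{projection}. This reduces the problem directly to the Hölder-type inequality of Corollary \ref{hoelder}.

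For the main range $1<p<\infty$, I would apply Corollary \ref{hoelder} with $\mu_j = \lambda_j$ and $c_j = a^\dagger(e_j)a(\bar e_j)$ (both non-negative, the latter by \eqref{projection_boundedness}) on a partial sum over $j=1,\ldots,n$. This yields
\begin{equation*}
  \sum_{j=1}^n \lambda_j a^\dagger(e_j)a(\bar e_j)
    \leq \Bigl(\sum_{j=1}^n \lambda_j^p\Bigr)^{1/p}
         \Bigl(\sum_{j=1}^n \bigl(a^\dagger(e_j)a(\bar e_j)\bigr)^q\Bigr)^{1/q}.
\end{equation*}
Since each $a^\dagger(e_j)a(\bar e_j)$ is an orthogonal projection, its $q$-th power equals itself, and the second factor collapses to $(\sum_j a^\dagger(e_j)a(\bar e_j))^{1/q}$, which tends to $N^{1/q}$ as $n \to \infty$ in the sense of quadratic forms on $\fock_0$. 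Passing to the limit (justified by the standing convention for series of this type) produces the claimed bound.

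The boundary cases are handled separately but are immediate. For $p=\infty$, $q=1$, bounding $\lambda_j \leq \Lambda_\infty$ and summing the projections gives $\sum_j \lambda_j c_j \leq \Lambda_\infty N$. For $p=1$, $q=\infty$, using $c_j \leq \id$ and $N^0 = \id$ yields $\sum_j \lambda_j c_j \leq \Lambda_1 \id = \Lambda_1 N^{1/\infty}$.

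The main (minor) subtlety is the interchange of the infinite sum with the operator inequality: the Hölder corollary is stated for finite sums, so I would carry out the estimate on partial sums and take strong limits on $\fock_0$, where $N$ acts as a scalar on each $\fock^{(n)}$ and all manipulations are effectively finite-dimensional. No deeper difficulty appears because the projection identity $c_j^q = c_j$ removes the only potentially awkward ingredient, namely an arbitrary power of a sum of non-commuting operators.
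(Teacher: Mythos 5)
Your proposal is correct and follows essentially the same route as the paper: handle $p=1$ via $a^\dagger(e_j)a(\bar e_j)\leq\id$, handle $p=\infty$ by pulling out $\sup_j\lambda_j$, and for $1<p<\infty$ apply Corollary \ref{hoelder} and use the projection property \eqref{projection} to collapse $\sum_j(a^\dagger(e_j)a(\bar e_j))^q$ to $N$. The partial-sum limiting argument you flag is exactly the standard justification the paper announces at the start of Section \ref{estimates} and then suppresses.
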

\begin{proof}
The simplest cases are $p=1,\infty$. For $p=1$,
$$
  \sum_j \lambda_j a^\dagger(e_j) a(\bar e_j)
    \leq \sum_j \lambda_j \id
$$
because of \eqref{projection_boundedness}. For $p=\infty$,
$$
  \sum_j \lambda_j a^\dagger(e_j) a(\bar e_j)
   \leq \sup_j \lambda_j \sum_j a^\dagger(e_j)a(\bar e_j).
$$
On to the cases $1<p<\infty$. By H\"older's inequality \ref{hoelder}
$$
  \sum_j \lambda_j a^\dagger(e_j) a(\bar e_j)
     \leq \Big(\sum_j \lambda_j^p\Big)^{\frac{1}{p}}
             \Big(\sum_j(a^\dagger(e_j)a(\bar e_j))^q\Big)^{\frac{1}{q}}
       =  \Big(\sum_j \lambda_j^p\Big)^{\frac{1}{p}} N^{\frac{1}{q}}
$$
since, by \eqref{projection}, $a^\dagger(e_j)a(\bar e_j)$ is an orthogonal
projection.
\end{proof}

At this point the fermionic character has entered the scene via
\eqref{projection_boundedness} and the calculations become invalid for bosons.
Lemma \ref{basic_estimate} can be applied to general operators by dint of an
operator version of Cauchy-Schwarz's inequality especially tailored to our
needs. Its proof mimics one of the elementary proofs.

\begin{proposition}\label{cauchy_schwarz}
Let $a_j,b_j:\hilbert\to\hilbert$ be bounded operators on a Hilbert space
$\hilbert$. Then, for $\sigma\in\{-1,1\}$
$$
  \sigma\sum_{j,k=1}^M a_j^*b_k^*b_ja_k
    \leq\sum_{j,k=1}^M a_j^*b_k^*b_ka_j
$$
\end{proposition}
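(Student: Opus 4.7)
The plan is to mimic the standard elementary proof of the Cauchy--Schwarz inequality by squaring a suitable combination and using positivity of $T^*T$. Concretely, for each pair $(j,k)$ I would form the operator
$$
  T_{jk} := b_j a_k - \sigma b_k a_j,
$$
so that $T_{jk}^* T_{jk} \geq 0$, and sum over $j,k=1,\ldots,M$. Since a sum of non-negative operators is non-negative, this yields
$$
  0 \leq \sum_{j,k=1}^M T_{jk}^* T_{jk}
       = \sum_{j,k=1}^M (a_k^* b_j^* - \sigma a_j^* b_k^*)(b_j a_k - \sigma b_k a_j).
$$

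Next I would expand the product into four double sums. The ``diagonal'' pieces $\sum_{j,k} a_k^* b_j^* b_j a_k$ and $\sum_{j,k} a_j^* b_k^* b_k a_j$ become identical after relabelling $j\leftrightarrow k$ (and using $\sigma^2=1$), so they combine to $2\sum_{j,k} a_j^* b_k^* b_k a_j$. The two ``cross'' terms, each carrying a factor $-\sigma$, likewise coincide after swapping $j\leftrightarrow k$, producing $-2\sigma\sum_{j,k} a_j^* b_k^* b_j a_k$. Collecting these contributions gives
$$
  0 \leq 2\sum_{j,k=1}^M a_j^* b_k^* b_k a_j \; - \; 2\sigma\sum_{j,k=1}^M a_j^* b_k^* b_j a_k,
$$
which is the claimed inequality after dividing by $2$.

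There is no real obstacle here: the only subtlety is the bookkeeping that identifies the various relabelled sums and tracks the sign $\sigma$ correctly. The fact that the same proof works for both $\sigma=+1$ and $\sigma=-1$ comes from $\sigma^2=1$, which makes $T_{jk}^*T_{jk}$ expand symmetrically. No assumption of self-adjointness or positivity on the individual $a_j,b_j$ is needed; only the trivial positivity $T^*T\geq 0$ is used.
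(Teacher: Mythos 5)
Your proof is correct and is essentially the paper's own argument: both identify twice the difference of the two sides with $\sum_{j,k}T_{jk}^*T_{jk}$ for $T_{jk}=b_ja_k-\sigma b_ka_j$ (the paper writes the equivalent $\sigma b_ka_j-b_ja_k$) and conclude by positivity of $T^*T$. The relabelling and sign bookkeeping in your expansion is exactly what the paper does in reverse.
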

\begin{proof}
Just look at the difference of both sides:
\begin{equation*}
\begin{split}
  2\sum_{j,k}(a_j^*b_k^*b_ka_j - \sigma a_j^*b_k^*b_ja_k)
    & = 2\sum_{j,k} a_j^*b_k^* ( b_ka_j-\sigma b_ja_k)\\
    & = \sum_{j,k} a_j^*b_k^* 
          (\sigma^2 b_ka_j -\sigma b_ja_k)
        +\sum_{j,k} a_k^*b_j^*(b_ja_k-\sigma b_ka_j) \\
    & = \sum_{j,k}\left( \sigma a_j^*b_k^*
          (\sigma b_ka_j-b_ja_k) + a_k^*b_j^*(b_ja_k-\sigma b_ka_j)
                             \right) \\
    & = \sum_{j,k} (\sigma a_j^*b_k^*-a_k^*b_j^*)
             (\sigma b_ka_j-b_ja_k) \\
    & = \sum_{j,k} (\sigma b_ka_j-b_ja_k)^*(\sigma b_ka_j-b_ja_k) \\
    & \geq 0
\end{split}
\end{equation*}
This implies the inequality.
\end{proof}

Now we can prove the first of the main theorems.

\begin{theorem}\label{dGamma}
Let $B\in B_r(L)$, $1\leq r\leq \infty$, and $s:=\frac{2(r-1)}{r}$. Then,
$$
  d\Gamma(B)^*d\Gamma(B) \leq 
\begin{cases}
  \|B\|_r^2 N^s + \|B\|_2^2\id  &  1< r< 2\\
  \|B\|_r^2 N^s                 &  r=1,\ 2\leq r\leq\infty
\end{cases}
$$
\end{theorem}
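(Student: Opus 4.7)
My plan is to diagonalize $B$ via its singular value decomposition $B=\sum_j\mu_j(e_j,\cdot)f_j$, writing $d\Gamma(B)=\sum_j\mu_j P_j$ with $P_j:=a^\dagger(f_j)a(\bar e_j)$, and to split the argument on whether $r\leq 2$ or $r\geq 2$ (the two regimes overlap at $r=2$, and together they cover the full range).

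For $1\leq r\leq 2$ I would apply Proposition \ref{cauchy_schwarz} directly to $d\Gamma(B)^*d\Gamma(B)=\sum_{j,k}\mu_j\mu_k P_j^*P_k$ with the weighted splitting $a_j:=\mu_j^{\alpha}P_j$ and $b_j:=\mu_j^{1-\alpha}\id$ for $\alpha:=1-r/2\in[0,1/2]$. Since $a_j^*b_k^*b_j a_k=\mu_j\mu_k P_j^*P_k$, the left side of the proposition reproduces $d\Gamma(B)^*d\Gamma(B)$ exactly, and the right side factors as $\|B\|_r^r\sum_j\mu_j^{2-r}P_j^*P_j$. A short CAR computation (expanding $a(\bar f_j)a^\dagger(f_j)$ via \eqref{car2} with $\|f_j\|=1$) gives $P_j^*P_j\leq a^\dagger(e_j)a(\bar e_j)$, the defect being the positive operator $|a(\bar f_j)a(\bar e_j)|^2$. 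Lemma \ref{basic_estimate} with $\lambda_j=\mu_j^{2-r}$ and $p:=r/(2-r)$ (so $p\geq 1$ iff $r\geq 1$, and $1/q=(2r-2)/r=s$) then bounds the remaining diagonal sum by $\|B\|_r^{2-r}N^s$, and multiplying yields $\|B\|_r^r\cdot\|B\|_r^{2-r}N^s=\|B\|_r^2 N^s$; the additive $\|B\|_2^2\id$ in the statement for $1<r<2$ is a harmless slack term.

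For $2\leq r\leq\infty$ I would first use \eqref{car2} once to write
$$d\Gamma(B)^*d\Gamma(B)=d\Gamma(B^*B)-II,\quad II:=\sum_{j,k}a^\dagger(e_j)a^\dagger(Be_k)a(\overline{Be_j})a(\bar e_k),$$
and then apply Proposition \ref{cauchy_schwarz} to the self-adjoint operator $II$ with $a_j:=a(\bar e_j)$, $b_j:=a(\overline{Be_j})$ and $\sigma=-1$. This bounds $-II$ by $\sum_j a^\dagger(e_j)\bigl[\sum_k a^\dagger(Be_k)a(\overline{Be_k})\bigr]a(\bar e_j)$; by the SVD the inner bracket equals $d\Gamma(BB^*)$, and commuting $d\Gamma(BB^*)$ past $a(\bar e_j)$ via the standard identity $[d\Gamma(A),a(\bar e_j)]=-a(\overline{A^*e_j})$ collapses the outer sum to $(N-\id)d\Gamma(BB^*)$. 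Hence
$$d\Gamma(B)^*d\Gamma(B)\leq d\Gamma(B^*B)+(N-\id)d\Gamma(BB^*).$$
Each term on the right is diagonal of the form $\sum_j\mu_j^2 a^\dagger(g_j)a(\bar g_j)$, so Lemma \ref{basic_estimate} with $\lambda_j=\mu_j^2$ and $p=r/2$ bounds each by $\|B\|_r^2 N^{s-1}$ (using $1/q=(r-2)/r=s-1$). Since both commute with $N$, their sum telescopes to $\|B\|_r^2 N^{s-1}(1+N-1)=\|B\|_r^2 N^s$.

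The hard part will be the exponent bookkeeping: in each regime one must calibrate the Cauchy-Schwarz weighting together with the Hölder exponent in Lemma \ref{basic_estimate} so that the resulting factors telescope exactly into $\|B\|_r^2$ and the $N$-power lands precisely on $s=2(r-1)/r$. The essential \emph{fermionic} ingredient is the one-mode inequality $P_j^*P_j\leq a^\dagger(e_j)a(\bar e_j)$ together with its quadratic counterpart implicit in the $(N-\id)d\Gamma(BB^*)$ term; these are where the CAR \eqref{car2} bites, and without them only the crude $N^2$ bound of \cite{CareyRuijsenaars1987} (which already holds in the bosonic setting) would be accessible.
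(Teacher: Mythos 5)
Your argument in the regime $2\le r<\infty$ is essentially the paper's own: the paper also anticommutes once to produce $\Sigma_1=d\Gamma(B^*B)$ plus a quartic remainder, bounds the remainder via Proposition \ref{cauchy_schwarz} with $\sigma=-1$ and unit weights, recognizes the inner sum as $\sum_k\mu_k^2a^\dagger(f_k)a(\bar f_k)$ sandwiched between $a^\dagger(e_j)\cdots a(\bar e_j)$, pulls $N$ through, and finishes with Lemma \ref{basic_estimate} at $p=r/2$; your $(N-\id)\,d\Gamma(BB^*)$ is the same object written via the commutator identity. In the regime $1\le r\le 2$, however, you take a genuinely different and in fact sharper route. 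The paper first anticommutes (creating the extra diagonal term that becomes $\|B\|_2^2\id$) and only then applies Cauchy--Schwarz with weights $\gamma_j/\gamma_k$ inside the $b$'s; you instead apply Proposition \ref{cauchy_schwarz} with $\sigma=+1$ directly to $\sum_{j,k}\mu_j\mu_kP_j^*P_k$, putting the weight $\mu_j^{1-\alpha}$ into scalar $b_j$'s. Your exponent bookkeeping checks out ($2(1-\alpha)=r$, $2\alpha=2-r$, $p=r/(2-r)$, $1/q=s$, $\|B\|_r^r\cdot\|B\|_r^{2-r}=\|B\|_r^2$), the one-mode inequality $P_j^*P_j\le a^\dagger(e_j)a(\bar e_j)$ is correct, and the net result $d\Gamma(B)^*d\Gamma(B)\le\|B\|_r^2N^s$ for $1\le r\le 2$ \emph{improves} the stated bound by removing the additive $\|B\|_2^2\id$; it also absorbs the $r=1$ case ($p=1$, $N^0=\id$) into the same computation.

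Two points need attention. First, your treatment of $r=\infty$ is a genuine (though patchable) gap: a general bounded $B$ has no singular value decomposition, so "$\lambda_j=\mu_j^2$, $p=\infty$" in Lemma \ref{basic_estimate} is not available, which is exactly why the paper runs the $r=\infty$ case through partial sums and the finite-dimensional restrictions $B_M$. Your own setup repairs this without partial sums if you note that $\sum_ka^\dagger(Be_k)a(\overline{Be_k})=d\Gamma(BB^*)$ holds basis-free and then bound $d\Gamma(BB^*)\le\|B\|^2N$ and $d\Gamma(B^*B)\le\|B\|^2N$ by positivity of $d\Gamma$ (Theorem \ref{well-definedness}) rather than by the lemma. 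Second, the "telescoping" step replaces $d\Gamma(BB^*)$ by $\|B\|_r^2N^{s-1}$ after multiplication by $N-\id$, which is not legitimate on $\fock^{(0)}$ where $N-\id<0$ (visible at $r=2$, where $N^{s-1}=\id$ does not vanish on the vacuum); you should note that all operators involved annihilate $\Omega$, so the sector $n=0$ is trivial and the argument applies on $\fock^{(n)}$, $n\ge 1$, where $N-\id\ge 0$. With those two repairs the proof is complete.
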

\begin{proof}
First of all, recall the singular value decomposition \eqref{singular_value}.
The simplest case $r=1$ follows immediately from
$$
  \Big\|\sum_j \mu_j a^\dagger(f_j)a(\bar e_j)\Big\|
    \leq \sum_j |\mu_j| = \|B\|_1
$$
On to the other cases. By the Cauchy-Schwarz inequality \ref{cauchy_schwarz},
\begin{equation*}
\begin{split}
  d\Gamma(B)^*d\Gamma(B)
    & = \sum_{j,k} a^\dagger(e_j)a(\overline{Be_j})a^\dagger(Be_k)a(\bar e_k)\\
    & = -\sum_{j,k} a^\dagger(e_j)a^\dagger(Be_k)a(\overline{Be_j})a(\bar e_k)
          + \sum_{j,k} (Be_j,Be_k)a^\dagger(e_j)a(\bar e_k) \\
    & \leq \sum_{j,k} \frac{\gamma_j^2}{\gamma_k^2}
          a^\dagger(e_j)a^\dagger(Be_k)a(\overline{Be_k})a(\bar e_j)
          + \sum_{j,k} (Be_j,Be_k)a^\dagger(e_j)a(\bar e_k) \\
    & =: \Sigma_2 + \Sigma_1
\end{split}
\end{equation*}
where $\gamma_j\in\R$, $\gamma_j\neq 0$, to be chosen appropriately. 

Let $1 < r < 2$. By dint of \eqref{singular_value} and Lemma
\ref{basic_estimate},
\begin{equation*}
  \Sigma_2  
      = \sum_k \frac{\mu_k^2}{\gamma_k^2} \sum_j \gamma_j^2
          a^\dagger(e_j)a(\bar e_j)
      \leq \sum_k \frac{\mu_k^2}{\gamma_k^2}
           \Big( \sum_j\gamma_j^{2p}\Big)^{\frac{1}{p}} N^{\frac{1}{q}}
\end{equation*}
with $\frac{1}{p}+\frac{1}{q}=1$. Upon choosing $\gamma_k=\mu_k^\alpha$ we
obtain
\begin{equation*}
    \Sigma_2 
       \leq \sum_k \mu_k^{2(1-\alpha)} \Big(\sum_j\mu_j^{2\alpha p}\Big)^{\frac{1}{p}} 
                N^{\frac{1}{q}}
\end{equation*}
We want $2(1-\alpha)=r$ and $2\alpha p=r$ which implies
$$
  \alpha=1-\frac{r}{2},\ p=\frac{r}{2-r}
$$ 
with $1<p<\infty$. Then,
$$
  \Sigma_2 \leq \Big(\sum_j \mu_j^r \Big)^{\frac{2}{r}} N^{\frac{2(r-1)}{r}}
$$
after some calculations. The sum $\Sigma_1$ can be estimated by
$$
  \Sigma_1 = \sum_j \mu_j^2 a^\dagger(e_j)a(\bar e_j)
    \leq \sum_j \mu_j^2 \id = \|B\|_2^2\id
$$
where the right-hand side is well-defined since $\|B\|_2\leq\|B\|_r$
for $1\leq r\leq 2$.
     
For $2\leq r<\infty$ we put $\gamma_j=1$ and use a different order of the
factors in $\Sigma_2$
\begin{equation*}
\begin{split}
  \Sigma_2 
    & =  \sum_{j,k} \mu_k^2 a^\dagger(f_k) a^\dagger(e_j)a(\bar e_j)a(\bar f_k) \\
    & \leq \sum_k \mu_k^2 a^\dagger(f_k) N a(\bar f_k) \\
    & = \sum_k \mu_k^2 a^\dagger(f_k)a(\bar f_k) N 
        - \sum_k \mu_k^2 a^\dagger(f_k)a(\bar f_k) \\
    & = N^{\frac{1}{2}} \sum_k \mu_k^2 a^\dagger(f_k)a(\bar f_k) N^{\frac{1}{2}} 
          - \Sigma_1
\end{split}
\end{equation*}
where we used that $N^{\frac{1}{2}}$ commutes with number preserving
operators. By Lemma \ref{basic_estimate},
$$
  \Sigma_2 \leq \|B\|_r^2 N^{\frac{r-2}{r}+1} -\Sigma_1
$$
which proves the present case.

The case $r=\infty$ needs a bit more care since we do not avail of a singular
value decomposition beforehand. Therefore, we look at the partial sums
$$
  d\Gamma_M(B) = \sum_{j=1}^M a^\dagger(Be_j)a(\bar e_j)
$$
The finite dimensional restriction
$$
  B_M := B \mid \spann\{e_1,\ldots, e_M\}
$$
however does have a singular value decomposition, the singular values
$\mu_j^{(M)}$ satisfying $\mu_j^{(M)}\leq \|B\|=\|B^*\|$ by the min-max
principle. Therefore, we can prove
\begin{align*}
  \sum_{j,k=1}^M a^\dagger(e_j)a^\dagger(Be_k)a(\overline{Be_k})a(\bar e_j)
       & \leq \|B\|^2\sum_{j=1}^M a^\dagger(e_j) N a(\bar e_j) \\
  \sum_{j,k=1}^M (Be_j,Be_k)a^\dagger(e_j)a(\bar e_k) 
       & \leq \|B\|^2 \sum_{j=1}^Ma^\dagger(e_j)a(\bar e_j)
\end{align*}
Thus,
\begin{equation*}
\begin{split}
  d\Gamma_M(B)^*d\Gamma_M(B)
    & \leq \|B\|^2 \sum_{j=1}^M a^\dagger(e_j)Na(\bar e_j) 
         + \|B\|^2\sum_{j=1}^Ma^\dagger(e_j)a(\bar e_j)  \\
    & = \|B\|^2 N^{\frac{1}{2}}\sum_{j=1}^M a^\dagger(e_j)a(\bar e_j)
             N^{\frac{1}{2}} \\
    & \leq \|B\|^2 N^2
\end{split}
\end{equation*}
That completes the proof.
\end{proof}

Now we turn to $\Delta(A)$ and $\Delta^+(C)$. Recall, that $A$ and $C$ must be
Hilbert-Schmidt operators for $\Delta(A)$ and $\Delta^+(C)$ to be well-defined
whence the following theorems only make sense for $1\leq r\leq 2$. Since Theorem
\ref{dGamma} contains the underlying ideas and computational details we may be
rather sketchy with the proofs.

\begin{theorem}\label{Delta}
Let $A\in B_r^-(L)$, $1\leq r\leq 2$, and $s:=\frac{2(r-1)}{r}$. Then,
$$
  \Delta(A)^* \Delta(A) \leq
\begin{cases}
  \|A\|_1^2\id      & r=1 \\
  \|A\|_r^2 N^s + \|A\|_2^2\id & 1<r\leq 2
\end{cases}
$$
\end{theorem}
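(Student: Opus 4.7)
My plan is to mirror the scheme of Theorem \ref{dGamma}, since $\Delta(A)^*\Delta(A)$ preserves particle number and its quartic expansion shares the same $\dagger,\dagger,-,-$ shape, only with indices paired differently.

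For the edge case $r=1$ I would argue directly from the singular value decomposition \eqref{singular_value}: $\Delta(A)=\sum_j\mu_j a(f_j)a(\bar e_j)$, and Theorem \ref{boundedness} gives $\|a(f_j)\|=\|a(\bar e_j)\|=1$, so the triangle inequality yields $\|\Delta(A)\|\le\sum_j\mu_j=\|A\|_1$ and the claimed bound $\Delta(A)^*\Delta(A)\le\|A\|_1^2\id$ follows at once.

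For $1<r\le 2$ I would take $\{e_j\}$ as the right singular ONS of $A$ and expand
\[ \Delta(A)^*\Delta(A)=\sum_{j,k} a^\dagger(e_j)a^\dagger(\overline{Ae_j})a(Ae_k)a(\bar e_k). \]
Applying the CAR \eqref{car2} to the middle pair, $a^\dagger(\overline{Ae_j})a(Ae_k)=-a(Ae_k)a^\dagger(\overline{Ae_j})+(Ae_j,Ae_k)\id$, splits this as $\Sigma_2+\Sigma_1$, where $\Sigma_1=\sum_j\mu_j^2 a^\dagger(e_j)a(\bar e_j)\le\|A\|_2^2\id$ by \eqref{projection_boundedness} (the double sum collapses because $(Ae_j,Ae_k)=\mu_j^2\delta_{jk}$ in the SVD basis), and $\Sigma_2=-\sum_{j,k}a^\dagger(e_j)a(Ae_k)a^\dagger(\overline{Ae_j})a(\bar e_k)$. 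The latter is tailor-made for Proposition \ref{cauchy_schwarz} with $\sigma=-1$; assigning the weighted operators $a_j:=\gamma_j a(\bar e_j)$ and $b_k:=\gamma_k^{-1}a^\dagger(\overline{Ae_k})$ yields
\[ \Sigma_2\le\sum_{j,k}\frac{\gamma_j^2}{\gamma_k^2} a^\dagger(e_j)a(Ae_k)a^\dagger(\overline{Ae_k})a(\bar e_j). \]
A second use of \eqref{car2} rewrites $a(f_k)a^\dagger(\bar f_k)=\id-a^\dagger(\bar f_k)a(f_k)$; the projection contribution is an operator of the form $T_{jk}T_{jk}^*\ge 0$ with $T_{jk}:=a^\dagger(e_j)a^\dagger(\bar f_k)$, and since it appears with a minus sign it can be dropped when seeking an upper bound. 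What remains is $\sum_k \mu_k^2\gamma_k^{-2}\sum_j\gamma_j^2 a^\dagger(e_j)a(\bar e_j)$, precisely the expression estimated inside the proof of Theorem \ref{dGamma}. Choosing $\gamma_k=\mu_k^{1-r/2}$ and invoking Lemma \ref{basic_estimate} with $p=r/(2-r)$ (and $\gamma_k=1$, $p=\infty$ at the endpoint $r=2$) produces $\Sigma_2\le\|A\|_r^2 N^s$, and summing the two bounds gives the theorem.

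The main obstacle is the sign bookkeeping in the two applications of \eqref{car2} and the Cauchy--Schwarz step; once one verifies that the projection term arising from $a(f_k)a^\dagger(\bar f_k)$ carries the right sign to be discarded, the estimate collapses to the machinery already set up for $d\Gamma(B)$. No new use of the skew-symmetry $A^T=-A$ is needed beyond what guarantees that $\Delta(A)$ is defined on $\fock_0$.
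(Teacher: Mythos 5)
Your proposal is correct and follows essentially the same route as the paper, which for $1<r\le 2$ simply defers to the argument of Theorem \ref{dGamma} (anticommute the middle pair, apply Proposition \ref{cauchy_schwarz} with weights $\gamma_j/\gamma_k$, reduce to Lemma \ref{basic_estimate} with $\gamma_k=\mu_k^{1-r/2}$, $p=r/(2-r)$); your version just fills in the details the paper leaves implicit. The only cosmetic difference is at the endpoint $r=2$, where the paper bounds $\mu_k^2\,a(f_k)a^\dagger(\bar f_k)\le\mu_k^2\,\id$ directly via \eqref{projection_boundedness} instead of anticommuting once more and discarding the non-positive term, but the two steps are equivalent.
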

\begin{proof}
We use the singular value decomposition \eqref{singular_value}. The case $r=1$
is obvious. For $1<r\leq 2$ we start, as in Theorem \ref{dGamma}, from
\begin{equation*}
\begin{split}
  \Delta(A)^*\Delta(A)
    & = \sum_{j,k} \mu_j\mu_k a^\dagger(e_j)a^\dagger(\bar f_j)a(f_k)a(\bar e_k)\\
    & = -\sum_{j,k}\mu_j\mu_k  a^\dagger(e_j)a(f_k)a^\dagger(\bar f_j)a(\bar e_k) 
           + \sum_j \mu_j^2 a^\dagger(e_j)a(\bar e_j)
\end{split}
\end{equation*}
For $1<r<2$ the proof runs along the same lines as in Theorem \ref{dGamma}.
However, for $r=2$ Cauchy-Schwarz's inequality \ref{cauchy_schwarz} gives us
\begin{equation*}
  \Delta(A)^*\Delta(A)
      \leq \sum_{j,k} \mu_k^2 a^\dagger(e_j)a(f_k)a^\dagger(\bar f_k)a(e_j) 
             + \sum_{j=1}^M \mu_j^2 a^\dagger(e_j)a(\bar e_j) 
      \leq \|A\|_2^2 N + \|A\|_2^2 \id
\end{equation*}
That completes the proof.
\end{proof}

The remaining operator $\Delta^+(C)$ could be treated in like manner. However,
it might be insightful to use an alternative idea. Note, that generally an
estimate for an operator does not yield an estimate for its adjoint.

\begin{theorem}\label{DeltaPlus}
Let $C\in B_r^-(L)$, $1\leq r\leq 2$, and $s:=\frac{2(r-1)}{r}$. Then,
$$
  \Delta^+(C)^*\Delta^+(C) \leq
\begin{cases}
  \|C\|_1^2 \id & r=1 \\
  \|C\|_r^2 N^s + 3\|C\|_2^2\id & 1<r\leq 2
\end{cases}
$$
\end{theorem}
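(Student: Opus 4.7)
The plan is to reduce the statement to Theorem \ref{Delta} by flipping the order $\Delta^+(C)^*\Delta^+(C) \leftrightarrow \Delta^+(C)\Delta^+(C)^*$ using the commutator identity \eqref{comAC}, rather than redoing the Cauchy-Schwarz and H\"older analysis from scratch. This is the ``alternative idea'' alluded to in the remark preceding the theorem.

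The case $r = 1$ I would dispatch directly from a singular value decomposition: writing $C = \sum_j \mu_j (e_j,\cdot) f_j$ turns $\Delta^+(C)$ into $\sum_j \mu_j\, a^\dagger(f_j) a^\dagger(\bar e_j)$. Each summand has norm at most $1$ by Theorem \ref{boundedness} (since $\|f_j\| = \|\bar e_j\| = 1$), so the triangle inequality yields $\|\Delta^+(C)\| \leq \sum_j \mu_j = \|C\|_1$, and hence $\Delta^+(C)^*\Delta^+(C) \leq \|C\|_1^2\, \id$. No commutator is needed here.

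For $1 < r \leq 2$, I would first check that $C^* \in B_r^-(L)$ with $\|C^*\|_r = \|C\|_r$; the identity $(C^*)^T = -C^*$ is a short consequence of $C^T = -C$ and $A^T = \bar A^*$. Using $\Delta^+(C)^* = \Delta(C^*)$ from Theorem \ref{well-definedness}, Theorem \ref{Delta} applied to $A = C^*$ gives a bound on the \emph{reversed} product
$$\Delta^+(C)\Delta^+(C)^* = \Delta(C^*)^*\Delta(C^*) \leq \|C\|_r^2 N^s + \|C\|_2^2\, \id.$$
To convert this into a bound on $\Delta^+(C)^*\Delta^+(C) = \Delta(C^*)\Delta^+(C)$, I invoke \eqref{comAC} with $A = C^*$, which reads
$$\Delta(C^*)\Delta^+(C) = \Delta^+(C)\Delta(C^*) - 4\, d\Gamma(CC^*) + 2\,\tr(C^*C)\, \id.$$
Since $\tr(C^*C) = \|C\|_2^2$, and $CC^* \geq 0$ forces $d\Gamma(CC^*) \geq 0$ by Theorem \ref{well-definedness}, the $-4\, d\Gamma(CC^*)$ term can be discarded in passing to an upper bound. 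Adding the resulting $2\|C\|_2^2\, \id$ to the bound on $\Delta^+(C)\Delta^+(C)^*$ produces exactly $\|C\|_r^2 N^s + 3\|C\|_2^2\, \id$, matching the statement.

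The only real subtlety is structural bookkeeping: one must recognise that Theorem \ref{Delta} naturally controls $\Delta^+(C)\Delta^+(C)^*$ rather than the desired quantity, and then exploit the fact that the commutator in \eqref{comAC} has the \emph{correct} sign, so that the $d\Gamma(CC^*)$ term contributes favourably rather than obstructing the estimate. Everything else—the skew-symmetry of $C^*$, the trace computation, and the bounding of $d\Gamma(CC^*)$—is routine.
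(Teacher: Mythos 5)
Your proposal is correct and follows essentially the same route as the paper: the case $r=1$ via the triangle inequality on the singular value decomposition, and for $1<r\leq 2$ the identity $\Delta^+(C)^*\Delta^+(C)=\Delta(C^*)^*\Delta(C^*)-4d\Gamma(CC^*)+2\tr(C^*C)\,\id$ from \eqref{comAC}, discarding $-4d\Gamma(CC^*)\leq 0$ and invoking Theorem \ref{Delta} for $A=C^*$. Your added checks (skew-symmetry of $C^*$, the sign of the commutator term) are exactly the points the paper leaves implicit.
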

\begin{proof}
The case $r=1$ is obvious. For $1<r\leq 2$ we use the commutator
$[\Delta,\Delta^+]$ from \eqref{comAC} to obtain
\begin{equation*}
\begin{split}
  \Delta^+(C)^*\Delta^+(C)
    & = \Delta(C^*)\Delta^+(C) \\
    & = \Delta^+(C)\Delta(C^*) + [\Delta(C^*),\Delta^+(C)] \\
    & =  \Delta(C^*)^*\Delta(C^*) - 4d\Gamma(CC^*) + 2\tr C^*C\cdot\id
\end{split}
\end{equation*}
Now use $d\Gamma(CC^*)\geq 0$ and Theorem \ref{Delta} to complete the proof.
\end{proof}

By using directly the defining series one could obtain better estimates, e.g.
for $r=2$
$$
    \Delta^+(C)^* \Delta^+(C) \leq  \|C\|_2^2 (N+2\id)
$$
It is instructive to write down the concrete bounds from the literature alluded
to in the introduction. Carey and Ruijsenaars have
\cite[2.14, 2.24, 2.25]{CareyRuijsenaars1987},
\begin{gather}
  d\Gamma(B)^*d\Gamma(B) \leq \|B\|_\infty N^2 \label{literature_dGamma}\\
  \Delta(A)^*\Delta(A) \leq \|A\|_2^2 N^2 ,\ 
  \Delta^+(C)^* \Delta^+(C) \leq \|C\|_2^2 (N+2\id)^2 \label{literature_DeltaDeltaPlus}
\end{gather}
When we assume $B$ just to be bounded, which is possible, then the estimate 
\eqref{literature_dGamma} for
$d\Gamma(B)$ is optimal. However, since $\Delta(A)$ and $\Delta^+(C)$ require
$A$ and $C$ to be Hilbert-Schmidt operators rather than bounded operators
\eqref{literature_DeltaDeltaPlus} does not give the correct magnitude at all.

The estimates by Grosse and Langmann \cite[App. B (b), (d)]{GrosseLangmann1992}
are derived in a super-version of the CCR and CAR. Being valid for bosons and
fermions alike they cannot reflect the special fermionic features used herein.

\section{Converse theorems\label{converse}}
Having seen Theorems \ref{dGamma}, \ref{Delta}, and \ref{DeltaPlus} one would
first and foremost ask whether the bounds given there are sharp. Since this is
not really interesting for $\dim L<\infty$ we tacitly assume $\dim L=\infty$. We
start with $d\Gamma(B)$ as this is the case which can be treated by elementary
means. The following statement for $r=1$ is also mentioned, without proof, in
\cite[p.7]{CareyRuijsenaars1987}.

\begin{theorem}\label{converse_dGamma}
Let $B\in B_\infty(L)$ and $d\Gamma(B)$ satisfy
\begin{equation}\label{converse_dGamma1}
  d\Gamma(B)^*d\Gamma(B) \leq \gamma_r N^s + \delta_r\id,\ 
  s=\frac{2(r-1)}{r}, \ 1\leq r<\infty
\end{equation}
Then $B\in B_1(L)$ for $s=0$. When $0<s<2$ then $B\in B_{r+\varepsilon}(L)$ for
all $\varepsilon>0$.
\end{theorem}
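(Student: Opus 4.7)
The plan is to reduce the problem to the self-adjoint case by a Cartesian decomposition, and then extract eigenvalue-sum bounds from Slater-determinant test vectors built from the top eigenvectors. Write $B = B_1 + iB_2$ with $B_1 = \tfrac12(B+B^*)$ and $B_2 = \tfrac1{2i}(B-B^*)$ self-adjoint. Since $d\Gamma(B)$ preserves every $\fock^{(n)}$ and $N|_{\fock^{(n)}} = n\,\id$, the hypothesis \eqref{converse_dGamma1} reads $\|d\Gamma(B)|_{\fock^{(n)}}\|^2 \leq \gamma_r n^s + \delta_r$. By Theorem \ref{well-definedness} the adjoint $d\Gamma(B)^* = d\Gamma(B^*)$ also preserves $\fock^{(n)}$ and has the same operator norm there; the identity $2d\Gamma(B_j) = d\Gamma(B) \pm d\Gamma(B)^*$ together with the triangle inequality then yields $\|d\Gamma(B_j)|_{\fock^{(n)}}\|^2 \leq 4(\gamma_r n^s + \delta_r)$ for $j=1,2$. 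By the Schatten-norm triangle inequality it therefore suffices to prove the desired conclusion separately for each self-adjoint $B_j$.

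Fix $j$; I would first argue that $B_j$ is compact. For $\alpha > 0$ let $P_\alpha = \chi_{[\alpha,\infty)}(B_j)$ and take any ONS $\{e_1, \dots, e_n\} \subset \operatorname{ran}(P_\alpha)$. With $\Phi = a^\dagger(e_1)\cdots a^\dagger(e_n)\Omega \in \fock^{(n)}$ (unit norm), a short calculation using the determinantal form of Slater matrix elements yields $(d\Gamma(B_j)\Phi, \Phi) = \sum_{k=1}^n (e_k, B_j e_k) \geq n\alpha$, while Cauchy-Schwarz bounds the left-hand side by $\|d\Gamma(B_j)\Phi\| \leq 2\sqrt{\gamma_r n^s + \delta_r}$. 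For $s<2$ this forces $n$ to be bounded uniformly, so $\operatorname{ran}(P_\alpha)$ is finite-dimensional; the analogous argument for negative spectral thresholds shows $B_j$ is compact.

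With $B_j$ now compact self-adjoint, list the positive eigenvalues $\lambda_1^+ \geq \lambda_2^+ \geq \cdots \geq 0$ with orthonormal eigenvectors $e_k^+$, and similarly $\lambda_k^-$ for the absolute values of the negative eigenvalues. For the test vector $\Phi_n^+ = a^\dagger(e_1^+)\cdots a^\dagger(e_n^+)\Omega$, the commutation relation $[d\Gamma(B_j), a^\dagger(g)] = a^\dagger(B_j g)$ (a direct consequence of the CAR) together with $a(f)\Omega = 0$ gives $d\Gamma(B_j)\Phi_n^+ = \bigl(\sum_{k=1}^n \lambda_k^+\bigr)\Phi_n^+$, so the operator bound produces
$$\sum_{k=1}^n \lambda_k^\pm \leq 2\sqrt{\gamma_r n^s + \delta_r}.$$

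For $s=0$ the right-hand side is independent of $n$, so $\sum_k \lambda_k^\pm < \infty$, i.e.\ $B_j \in B_1(L)$, whence $B = B_1 + iB_2 \in B_1(L)$. For $0 < s < 2$, monotonicity gives $n\lambda_n^\pm \leq \sum_{k=1}^n \lambda_k^\pm = O(n^{s/2})$, so $\lambda_n^\pm = O(n^{s/2-1}) = O(n^{-1/r})$; hence $\sum_n (\lambda_n^\pm)^{r+\varepsilon}$ is dominated by $\sum_n n^{-1-\varepsilon/r} < \infty$ for every $\varepsilon > 0$, giving $B_j \in B_{r+\varepsilon}(L)$ and therefore $B \in B_{r+\varepsilon}(L)$. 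The main obstacle is the Cartesian reduction itself, since the hypothesis is on $d\Gamma(B)^*d\Gamma(B)$ rather than on $d\Gamma(B)$ directly and does not obviously split into bounds on the self-adjoint parts; the key enabling observation is that $d\Gamma(B)$ preserves particle number, so on each $\fock^{(n)}$ the equality $\|d\Gamma(B)^*\| = \|d\Gamma(B)\|$ is automatic and the triangle inequality suffices.
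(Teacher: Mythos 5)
Your proposal is correct, and its core is the same as the paper's: both proofs rest on the identity $(\Phi, d\Gamma(B)\Phi) = \sum_{j=1}^n (e_j, Be_j)$ for a Slater determinant $\Phi = a^\dagger(e_n)\cdots a^\dagger(e_1)\Omega$, reduce to self-adjoint $B$ via the Cartesian decomposition, specialize the ONS to the eigenbasis to get $\sum_{j\leq n}\mu_j = O(n^{s/2})$, and conclude $\mu_n = O(n^{s/2-1}) = O(n^{-1/r})$, hence summability of $\mu_n^{r+\varepsilon}$ (and of $\mu_n$ itself when $s=0$). The one genuine divergence is the compactness step. The paper derives $\sum_{j\leq n}|(e_j,Be_j)| \leq \gamma n^{s/2}$ for an \emph{arbitrary} ONS by a permutation argument, deduces $(e_j,Be_j)\to 0$ along every ONS, and then invokes a cited characterization of compactness in terms of that property. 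You instead test against an ONS chosen inside the spectral subspace $\ran \chi_{[\alpha,\infty)}(B_j)$, where every diagonal term is at least $\alpha$, so that $n\alpha \leq 2(\gamma_r n^s+\delta_r)^{1/2}$ with $s<2$ forces that subspace to be finite-dimensional. This is self-contained where the paper outsources the step, and it absorbs the sign issue that the paper's permutation argument exists to handle. Your reduction to the self-adjoint parts via $\|d\Gamma(B)^*|_{\fock^{(n)}}\| = \|d\Gamma(B)|_{\fock^{(n)}}\|$ on each reducing subspace $\fock^{(n)}$ is also sound (the paper instead takes real and imaginary parts of the scalar identity, which is equivalent); the factor $4$ you carry is a harmless overestimate. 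In short: same strategy, with a more elementary compactness argument.
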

\begin{proof}
Let $\{e_j\}$ be any ONS. We start with the formula
\begin{equation}\label{converse_dGamma2}
  (a^\dagger(e_n)\cdots a^\dagger(e_1)\Omega,d\Gamma(B)a^\dagger(e_n)\cdots a^\dagger(e_1)\Omega)
    = \sum_{j=1}^n (e_j,Be_j)
\end{equation}
which along with the bound \eqref{converse_dGamma1} implies
\begin{equation}\label{converse_dGamma3}
  \Big|\sum_{j=1}^n (e_j,Be_j)\Big| \leq ( \gamma_r n^s + \delta_r)^{\frac{1}{2}}
\end{equation}
At first, we consider the special case of self-adjoint $B$. Then, either
$(e_j,Be_j)\geq 0$ or $(e_j,Be_j)<0$. For the ONS at hand we may permute the
indices as we wish without changing the right-hand side in
\eqref{converse_dGamma3}. Hence, with some constant $\gamma$
\begin{equation}\label{converse_dGamma4}
  \sum_{j=1}^n | (e_j,Be_j)| \leq \gamma n^{\frac{s}{2}}
\end{equation}
which in turn shows $(e_j,Be_j)\to 0$. If this were not so there would be an
$\varepsilon>0$ such that $|(e_j,Be_j)|\geq \varepsilon$ infinitely often. By
the permutation argument this would contradict \eqref{converse_dGamma4} since
$0\leq s<2$. Thus, we have shown that $(e_j,Be_j)\to 0$ for all ONS in $L$ which
implies $B$ is compact (see e.g. \cite{BakicGuljas1998}). Using in
\eqref{converse_dGamma4} the ONS from the singular value decomposition
\eqref{singular_value} we obtain
\begin{equation}\label{converse_dGamma5}
  \sum_{j=1}^n \mu_j \leq \gamma n^{\frac{s}{2}}
\end{equation}
where we noted $(e_j,f_j)=\pm 1$. For $s=0$ this implies $B\in B_1(L)$. Let
$s>0$. From \eqref{converse_dGamma5} we obtain the estimate
$$
  \mu_n \leq n^{\frac{s}{2}-1}
$$
For the powers $\mu_n^r$ to be summable it suffices that $r(1-\frac{s}{2})>1$.
This is equivalent to $\frac{2(r-1)}{r}>s$ which implies the statement for
self-adjoint $B$.

For general operators $B$ take real and imaginary parts in
\eqref{converse_dGamma2} and note $d\Gamma(B)^*=d\Gamma(B^*)$. Applying the
first part to $B+B^*$ and $i(B-B^*)$ completes the proof.
\end{proof}

For the operators $\Delta(A)$ and $\Delta^+(C)$ we need more machinery in
particular exponential functions of $\Delta^+(C)$. Fortunately, it is enough to
define them on the vacuum
$$
  \exp (z\Delta^+(C))\Omega,\ z\in\C
$$
where the exponential is defined via the power series. Such expressions were
studied by Robinson \cite{Robinson1995} and called fermionic Gaussians. In
physics one encounters the name BCS states. Their scalar product turns out to be
an entire analytic function in $z$.

\begin{lemma}\label{exponential_order}
Let $C\in B_2^-(L)$. Assume 
\begin{equation}\label{exponential_order1}
  \Delta^+(C)^*\Delta^+(C) \leq \gamma_r N^s + \delta_r\id,\ s:=\frac{2(r-1)}{r}
\end{equation}
for some $1\leq r\leq 2$. Then, the function
$$
  \omega(z) := (\exp(\bar z\Delta^+(C))\Omega, \exp( z\Delta^+(C)\Omega))
$$
is analytic on $\C$ and of exponential order $r$.
\end{lemma}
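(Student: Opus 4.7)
The plan is to expand both exponentials as power series and exploit the particle-number grading. Since $\Delta^+(C)$ maps $\fock^{(m)}$ into $\fock^{(m+2)}$, we have $\Delta^+(C)^n\Omega\in\fock^{(2n)}$, and Theorem \ref{fock_space} forces orthogonality of distinct powers. Taking account of the conjugate linearity of the inner product and the $\bar z$ appearing in its first argument, one gets
\begin{equation*}
  \omega(z) = \sum_{n=0}^\infty \frac{z^{2n}}{(n!)^2}\,\|\Delta^+(C)^n\Omega\|^2
\end{equation*}
a power series in $z^2$ with non-negative coefficients, from which both analyticity and the exponential order will be read off.

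Next, to control $\|\Delta^+(C)^n\Omega\|$ I would iterate the hypothesis \eqref{exponential_order1}. Since $\Delta^+(C)^{k-1}\Omega\in\fock^{(2k-2)}$, the number operator acts there as multiplication by $2k-2$, so
\begin{equation*}
  \|\Delta^+(C)^k\Omega\|^2
    = (\Delta^+(C)^{k-1}\Omega,\Delta^+(C)^*\Delta^+(C)\,\Delta^+(C)^{k-1}\Omega)
    \leq (\gamma_r(2k-2)^s + \delta_r)\,\|\Delta^+(C)^{k-1}\Omega\|^2
\end{equation*}
Iterating from $\|\Omega\|=1$ and using $\gamma_r(2k-2)^s+\delta_r\leq A\,k^s$ for a suitable constant $A$ (absorbing $\delta_r$ and the $k=1$ factor), one obtains $\|\Delta^+(C)^n\Omega\|^2\leq A^n (n!)^s$. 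Substituting back gives the majorant
\begin{equation*}
  |\omega(z)| \leq \sum_{n=0}^\infty \frac{(A|z|^2)^n}{(n!)^{2-s}},\qquad 2-s=\frac{2}{r}
\end{equation*}
Because $2-s>0$ for $1\leq r\leq 2$, this series has infinite radius of convergence, so $\omega$ is entire on $\C$.

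The exponential order is then read off from this Mittag-Leffler-type majorant. Viewing it as a function of $w=z^2$, the coefficients are $c_n=A^n/(n!)^{2/r}$, so by Stirling $-\log c_n\sim (2/r)\,n\log n$. The standard formula $\rho=\limsup_n n\log n/(-\log c_n)$ then yields order $r/2$ in the variable $w$, hence order $2\cdot(r/2)=r$ in $z$, which is exactly the claim.

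The main obstacle is purely bookkeeping: one must make sure that the product $\prod_{k=1}^n(\gamma_r(2k-2)^s+\delta_r)$ really grows no faster than $A^n(n!)^s$, so that the final majorant has index \emph{exactly} $2/r$ and the order obtained is $r$ rather than something larger. The conceptual content --- reduction via particle-number orthogonality plus recognition of a Mittag-Leffler series with index $2/r$ --- is straightforward once this estimate is pinned down.
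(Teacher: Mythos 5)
Your proposal is correct and follows essentially the same route as the paper: expand $\omega$ via the particle-number grading, iterate the quadratic-form bound on $\fock^{(2k)}$ to get $\|\Delta^+(C)^n\Omega\|^2\leq A^n(n!)^s$, and read off the order from the resulting majorant with index $2-s=\frac{2}{r}$. The only (immaterial) difference is the last step: you invoke the standard coefficient formula $\rho=\limsup_n \frac{n\log n}{-\log c_n}$ for the order, whereas the paper uses the classical Jensen inequality (valid since $\frac{2}{r}\geq 1$) to majorize the series directly by $\gamma\exp\bigl(\tfrac{2}{r}\tilde z^{\,r}\bigr)$; your bookkeeping worry about $\prod_k(\gamma_r(2k-2)^s+\delta_r)\leq A^n(n!)^s$ is easily settled since each factor with $k\geq 2$ is at most $(\gamma_r2^s+\delta_r)k^s$.
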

\begin{proof}
Recall from \eqref{mapping_properties} that
$\Delta^+(C):\fock^{(n)}\to\fock^{(n+2)}$ and 
$\fock^{(m)}\perp\fock^{(n)}$ for $m\neq n$. Then,
$$
  \omega(z) = \sum_{n=0}^\infty \frac{z^{2n}}{(n!)^2} 
     (\Delta^+(C)^n\Omega,\Delta^+(C)^n\Omega)
$$
Since the constants do not matter we may simplify the right-hand side of
\eqref{exponential_order1} to
$$
  \Delta^+(C)^*\Delta^+(C) \leq \gamma (N^s + \id)
$$
with $s=\frac{2(r-1)}{r}$ and some appropriate $\gamma$. Unfortunately, such
estimates do not transfer generally to powers of operators. Therefore, we have
to estimate by hand
$$
  (\Delta^+(C)^{n+1})^*\Delta^+(C)^{n+1}
    \leq \gamma (\Delta^+(C)^n)^* ( N^s + \id)\Delta^+(C)^n
$$
We know $\Delta^+(C)^n\Omega\in\fock^{(2n)}$ and
$N\mid\fock^{(2n)}=2n\id\mid\fock^{(2n)}$. Hence,
$$
  (\Omega,(\Delta^+(C)^{n+1})^*\Delta^+(C)^{n+1}\Omega)
     \leq \gamma ((2n)^s+1)(\Omega,(\Delta^+(C)^n)^*\Delta^+(C)^n\Omega)
$$
Successively,
\begin{equation*}
\begin{split}
  (\Omega,(\Delta^+(C)^{n+1})^*\Delta^+(C)^{n+1}\Omega)
     & \leq \gamma^{n+1} ((2n)^s+1)((2(n-1))^s+1)\cdots 1 \\
     & \leq \gamma^{n+1} 2^{n(s+1)}(n!)^s
\end{split}
\end{equation*} 
where the last estimate is for convenience. With an appropriate $\tilde z$,
$$
  |\omega(z)|
   \leq \sum_{n=0}^\infty \frac{|z|^{2n}\gamma^{n+1}2^{n(s+1)}}{(n!)^{2-s}}
    = \gamma \sum_{n=0}^\infty \frac{\tilde z^{2n}}{(n!)^{\frac{2}{r}}}
$$
since $2-s=\frac{2}{r}$. This shows $\omega$ is an entire function. Since
$1\leq r\leq 2$, we may use the classical Jensen inequality to deduce
$$
  |\omega(z)|
   \leq \delta \sum_{n=0}^\infty 
          \Big( \frac{\tilde z^{nr}}{n!}\Big)^{\frac{2}{r}}
   \leq \gamma \Big( \sum_{n=0}^\infty 
           \frac{\tilde z^{nr}}{n!}\Big)^{\frac{2}{r}}
   = \gamma \exp\Big( \frac{2}{r}\tilde z^r\Big)
$$
Hence, $\omega$ is of exponential order $r$.
\end{proof}

Lemma \ref{exponential_order} pertains to Fock space properties of
$\exp(z\Delta^+(C))$. On the other hand, we can express the scalar product on
$\fock$ through operators on $L$. See e.g. Robinson \cite{Robinson1995}.

\begin{proposition}\label{determinant}
Let $C\in B_2^-(L)$ and $z\in\C$. Then,
$$
  (\exp(\bar z\Delta^+(C))\Omega, \exp(z\Delta^+(C))\Omega)
    = \det(\id+4z^2C^*C)
$$
\end{proposition}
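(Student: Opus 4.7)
My plan is to exploit that both sides are entire functions of $z$ and to compare them after reducing $C$ to a normal form. The left-hand side is entire by Lemma~\ref{exponential_order} applied with $r = 2$ (admissible for every $C \in B_2^-(L)$), and the right-hand side is the Fredholm determinant of $\id + 4z^2 C^*C$, which is entire because $C^*C \in B_1(L)$ whenever $C \in B_2(L)$. Both sides depend continuously on $C$ with respect to $\|\cdot\|_2$: the right-hand side because Fredholm determinants are $\|\cdot\|_1$-continuous and $C \mapsto C^*C$ is continuous from $B_2(L)$ to $B_1(L)$; the left-hand side because $\Delta^+(C)\Omega \in \fock^{(2)}$ depends continuously on $C$ in Hilbert--Schmidt norm, and together with the exponential-order bound of Lemma~\ref{exponential_order} this propagates to uniform convergence on compact subsets of $\C$. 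Hence it suffices to prove the identity for finite-rank skew-symmetric $C$.

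For such $C$, I would put it into Youla normal form compatible with the conjugation. The operator $C^*C$ is positive trace-class, commutes with the conjugation (thanks to $C^T=-C$), and each of its nonzero eigenvalues $\mu_k^2$ has even multiplicity by skew-symmetry. Diagonalising $C^*C$ by a conjugation-invariant ONS and pairing vectors within each eigenspace produces an ONS $\{v_k,w_k\}_{k=1}^K$ of $\ran C + \ran C^*$ with $\bar v_k = v_k$, $\bar w_k = w_k$, and $Cv_k = -\mu_k w_k$, $Cw_k = \mu_k v_k$, $\mu_k>0$. Extending this to a complete ONS of $L$ and invoking that $\omega$ and $\det(\id + 4z^2 C^*C)$ are independent of the ONS, we may compute in this basis.

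In the normal-form basis $\Delta^+(C) = \sum_{k=1}^K 2\mu_k\,a^\dagger(v_k)a^\dagger(w_k)$; the summands pairwise commute and each squares to zero, so
\[
  \exp(z\Delta^+(C))\Omega = \prod_{k=1}^K\bigl(\Omega + 2\mu_k z\,a^\dagger(v_k)a^\dagger(w_k)\Omega\bigr).
\]
Pairing against $\exp(\bar z \Delta^+(C))\Omega$ and using that the pair states $a^\dagger(v_k)a^\dagger(w_k)\Omega$ are orthonormal across distinct $k$ gives a factorised inner product $\prod_k (1 + 4\mu_k^2 z^2)$, which is identified with $\det(\id + 4z^2 C^*C)$ via the eigenvalues of $C^*C$ read off in this basis; the continuity from the first paragraph extends the identity to all $C \in B_2^-(L)$. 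The main obstacle I anticipate is constructing the Youla decomposition so that the ONS is simultaneously adapted to the skew-symmetry of $C$ and invariant under the conjugation $f\mapsto\bar f$; once this is secured, the remainder is a one-block computation plus a routine continuity argument.
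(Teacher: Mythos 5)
The paper gives no proof of this proposition at all --- it is imported from Robinson \cite{Robinson1995} --- so there is no in-house argument to measure yours against; your overall route (reduce to finite rank by $\|\cdot\|_2$-continuity, pass to a Youla normal form, factorise the exponential over commuting nilpotent pair operators, compute the overlap term by term) is the standard one and is sound in outline. Incidentally, the conjugation-compatibility issue you flag as the main obstacle is a non-issue: from the unitary congruence $U^TCU=\bigoplus_k\left(\begin{smallmatrix}0&\mu_k\\-\mu_k&0\end{smallmatrix}\right)$ one gets $Cu_{2k-1}=-\mu_k\bar u_{2k}$ and $Cu_{2k}=\mu_k\bar u_{2k-1}$, and evaluating $\Delta^+(C)=\sum_j a^\dagger(Cu_j)a^\dagger(\bar u_j)$ in the ONS $\{u_j\}$ already yields $\sum_k 2\mu_k\,a^\dagger(\bar u_{2k-1})a^\dagger(\bar u_{2k})$ with $\{\bar u_j\}$ again an ONS; no conjugation-invariant eigenvectors are needed.

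The genuine gap is your final identification. Your computation gives $\omega(z)=\prod_{k=1}^K(1+4\mu_k^2z^2)$ with \emph{one} factor per Youla pair, whereas, as you yourself note, each eigenvalue $\mu_k^2$ of $C^*C$ has multiplicity two, so $\det(\id+4z^2C^*C)=\prod_{k=1}^K(1+4\mu_k^2z^2)^2$. What you have actually proved is $\omega(z)=\det(\id+4z^2C^*C)^{1/2}$ (an entire branch exists precisely because the multiplicities are even), and the sentence ``which is identified with $\det(\id+4z^2C^*C)$'' papers over this exponent mismatch. A single block already exhibits it: $Ce_1=-\mu e_2$, $Ce_2=\mu e_1$ gives $\Delta^+(C)=2\mu\,a^\dagger(e_1)a^\dagger(e_2)$, hence $\omega(z)=1+4\mu^2z^2$, while $C^*C=\mu^2\id$ on $\C^2$ gives $\det(\id+4z^2C^*C)=(1+4\mu^2z^2)^2$. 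The standard fermionic-Gaussian overlap formula does carry the square root, so the discrepancy lies in the statement rather than in your block computation (note the paper is itself inconsistent, writing $\det(\id+z^2C^*C)$ without the $4$ in the proof of Theorem \ref{converse_DeltaPlus}; the only uses made of the formula --- the exponential order of $\omega$ and the location of its zeros --- are insensitive to squaring). Still, as written your argument does not establish the displayed identity; you must either insert the square root or justify why the determinant should be taken with one factor per pair.
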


Combining Lemma \ref{exponential_order} with the determinant in Proposition
\ref{determinant}, hopefully, will tell us something about $C$. To this end, we
use a corollary of Jensen's integral formula from complex analysis that relates
the distribution of zeros of entire functions with their exponential order. See
\cite{Favorov2008} for the statement and some refinements.

\begin{theorem}\label{converse_DeltaPlus}
Let $C\in B_2^-(L)$. If $\Delta^+(C)$ satisfies the estimate
\begin{equation}\label{converse_DeltaPlus1}
   \Delta^+(C)^*\Delta^+(C) \leq \gamma_r N^s + \delta_r\id, \ s=\frac{2(r-1)}{r},\
   1\leq r\leq 2
\end{equation}
then $C\in B_{r+\varepsilon}^-(L)$ for all $\varepsilon>0$.
\end{theorem}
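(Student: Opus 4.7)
The plan is to combine the two preceding results directly. Lemma \ref{exponential_order} establishes that the scalar product $\omega(z) = (\exp(\bar z\Delta^+(C))\Omega, \exp(z\Delta^+(C))\Omega)$ is an entire function of exponential order at most $r$, while Proposition \ref{determinant} identifies this same function with the determinant $\det(\id+4z^2C^*C)$. The strategy is to read off the zeros of $\omega$ from the determinant factorization and then invoke the classical bound on the convergence exponent of the zero set of an entire function of finite order.

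First I would diagonalize $C^*C$: since $C\in B_2^-(L)$, the operator $C^*C$ is trace class and positive, so in an eigenbasis of $C^*C$ with eigenvalues $\mu_j^2$ (the squared singular values of $C$, repeated with multiplicity) the determinant factors as
$$
  \omega(z) \;=\; \det(\id+4z^2C^*C) \;=\; \prod_j \bigl(1+4\mu_j^2 z^2\bigr),
$$
the infinite product converging because $\sum_j \mu_j^2 = \|C\|_2^2<\infty$. Each strictly positive $\mu_j$ contributes a pair of zeros at $z=\pm i/(2\mu_j)$, so the zero sequence of $\omega$ is $\{\pm i/(2\mu_j):\mu_j>0\}$, counted with the multiplicity of $\mu_j^2$ as an eigenvalue of $C^*C$. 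Moreover $\omega(0)=1\neq 0$, so we do not need to worry about a zero at the origin.

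Next I would apply the corollary to Jensen's integral formula cited just before the statement (\cite{Favorov2008}): the zeros $\{z_k\}$ of an entire function of finite exponential order $\rho$ satisfy
$$
  \sum_k \frac{1}{|z_k|^{\rho+\varepsilon}} \;<\; \infty
$$
for every $\varepsilon>0$. Taking $\rho=r$ and substituting the zeros computed above gives
$$
  \sum_j (2\mu_j)^{r+\varepsilon} \;<\; \infty,
$$
which is precisely $C\in B_{r+\varepsilon}(L)$. Combined with the skew-symmetry $C^T=-C$ assumed from the outset, this yields $C\in B_{r+\varepsilon}^-(L)$ and completes the argument.

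The main obstacle is purely bookkeeping rather than analytic: correctly matching the zero set of $\det(\id+4z^2C^*C)$ with the singular values of $C$ (carrying multiplicities through the factor of $z^2$), and checking that the corollary of Jensen's formula applies in the form needed. No additional hard estimate is required beyond what was already carried out in Lemma \ref{exponential_order}; the complex-analytic theorem does the essential work of converting exponential order into a Schatten condition.
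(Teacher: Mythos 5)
Your proposal is correct and follows essentially the same route as the paper: combine Lemma \ref{exponential_order} (exponential order at most $r$) with the determinant formula of Proposition \ref{determinant} to locate the zeros of $\omega$ at $\pm i/(2\mu_j)$, then apply the Jensen-formula corollary bounding the convergence exponent of the zeros by the order. Your treatment is in fact slightly more careful than the paper's, which silently drops the factor $4$ from the determinant and does not spell out the product factorization or the multiplicity bookkeeping.
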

\begin{proof}
We use the formula from Proposition \ref{determinant}
$$
  \omega(z) := (\exp(\bar z\Delta^+(C))\Omega, \exp( z\Delta^+(C)\Omega)) 
             = \det(\id + z^2 C^*C).           
$$
Lemma \ref{exponential_order} and \eqref{converse_DeltaPlus1} imply
$\omega$ has exponential order $r$. Because of Proposition \ref{determinant}
the zeros $z_j\neq 0$ of $\omega$ are given through the singular values $\mu_j$
of $C$
$$
  z_j = \pm \frac{i}{\mu_j}\ \text{for all}\ \mu_j\neq 0
$$
The theorem from complex analysis mentioned above tells us
$$
  2\sum_j \mu_j^\alpha = \sum_j \frac{1}{|z_j|^\alpha} < \infty
$$
for all $\alpha>r$. Hence, $C\in B_\alpha^-(L)$ for all $\alpha>r$.
\end{proof}

Theorem \ref{converse_DeltaPlus} can be used for $\Delta(A)$ by the same
reasoning as in Theorem \ref{DeltaPlus}.

\begin{theorem}\label{converse_Delta}
Let $A\in B_2^-(L)$. If $\Delta(A)$ satisfies the estimate
\begin{equation}\label{converse_Delta1}
   \Delta(A)^*\Delta(A) \leq \gamma_r N^s + \delta_r\id, \ s=\frac{2(r-1)}{r},\
   1\leq r\leq 2
\end{equation}
then $A\in B_{r+\varepsilon}^-(L)$ for all $\varepsilon>0$.
\end{theorem}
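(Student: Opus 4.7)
The plan is to reduce the statement for $\Delta(A)$ to the already-established Theorem \ref{converse_DeltaPlus} for $\Delta^+$, using exactly the trick that powered the proof of Theorem \ref{DeltaPlus} but run in the opposite direction. Recall from Theorem \ref{well-definedness} that $\Delta(A)^* = \Delta^+(A^*)$, so $\Delta(A)^*\Delta(A) = \Delta^+(A^*)\Delta(A)$, while $\Delta^+(A^*)^*\Delta^+(A^*) = \Delta(A)\Delta^+(A^*)$. These two quantities differ only by a commutator, and the commutator is given explicitly by \eqref{comAC}.

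Setting $C = A^*$ in \eqref{comAC} yields
\begin{equation*}
  [\Delta(A),\Delta^+(A^*)] = -4\,d\Gamma(A^*A) + 2\tr(AA^*)\id = -4\,d\Gamma(A^*A) + 2\|A\|_2^2\,\id,
\end{equation*}
so that
\begin{equation*}
  \Delta^+(A^*)^*\Delta^+(A^*)
    = \Delta(A)^*\Delta(A) - 4\,d\Gamma(A^*A) + 2\|A\|_2^2\,\id.
\end{equation*}
Since $A^*A \geq 0$, Theorem \ref{well-definedness} gives $d\Gamma(A^*A) \geq 0$, which we discard. Substituting the hypothesis \eqref{converse_Delta1} into the remaining terms then produces
\begin{equation*}
  \Delta^+(A^*)^*\Delta^+(A^*) \leq \gamma_r N^s + \bigl(\delta_r + 2\|A\|_2^2\bigr)\id,
\end{equation*}
which is exactly an estimate of the form \eqref{converse_DeltaPlus1} for $\Delta^+(A^*)$.

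To invoke Theorem \ref{converse_DeltaPlus} we need $A^*\in B_2^-(L)$. Skew-symmetry $A^T = -A$, i.e.\ $\bar A^* = -A$, gives $A^* = -\bar A$, from which $(A^*)^T = \overline{A^*}^* = \bar A^{**} \cdot (\text{suitably interpreted}) = -A^*$, so $A^* \in B_2^-(L)$. Applying Theorem \ref{converse_DeltaPlus} yields $A^* \in B_{r+\varepsilon}^-(L)$ for every $\varepsilon>0$, and since the singular values of $A$ and $A^*$ coincide we have $\|A\|_{r+\varepsilon} = \|A^*\|_{r+\varepsilon} < \infty$, i.e.\ $A \in B_{r+\varepsilon}^-(L)$.

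There is no real obstacle here: the algebraic reduction via the commutator \eqref{comAC} is trivial, and positivity of $d\Gamma(A^*A)$ makes the sign work in our favour (which is crucial — had the commutator contributed with the opposite sign we could not have absorbed it). The only minor bookkeeping is checking that skew-symmetry and Schatten membership pass to $A^*$, which is immediate from the definition $A^T = \bar A^*$ and the isometry $\|\cdot\|_p = \|(\cdot)^*\|_p$.
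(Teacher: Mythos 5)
Your proposal is correct and follows essentially the same route as the paper: the paper's proof likewise transfers the hypothesis from $\Delta(A)$ to $\Delta^+(A^*)$ via the commutator identity \eqref{comAC} (dropping the nonnegative term $4d\Gamma(A^*A)$) to obtain $\Delta^+(A^*)^*\Delta^+(A^*)\leq \gamma_r N^s + (\delta_r + 2\tr A^*A)\id$, and then invokes Theorem \ref{converse_DeltaPlus}. Your additional bookkeeping (that $A^*$ is again skew-symmetric and that the singular values of $A$ and $A^*$ coincide) is correct and merely makes explicit what the paper leaves tacit.
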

\begin{proof}
As in Theorem \ref{DeltaPlus} we obtain the estimate
$$
  \Delta^+(A^*)^* \Delta^+(A^*)
     \leq \gamma_r N^s + \delta_r\id + 2\tr A^*A\cdot \id
$$
Then, Theorem \ref{converse_DeltaPlus} yields the statement.
\end{proof}

Theorems \ref{converse_dGamma}, \ref{converse_Delta}, \ref{converse_DeltaPlus}
naturally make one come up with the question as to whether the $\varepsilon$
could be removed there. Except for one special case, $r=1$ in Theorem
\ref{dGamma}, this is an open problem. If we could get rid of $\varepsilon$ the
bounds in Section \ref{estimates} would become sharp, at least asymptotically.
That this is so was conjectured by Carey and Ruijsenaars
\cite{CareyRuijsenaars1987} and Ottesen \cite{Ottesen1995} for the case $r=1$.
Our proofs as they stand cannot be generalized. The estimate of the singular
values in Theorem \ref{converse_dGamma} is sharp as show simple examples. As to
Theorem \ref{converse_DeltaPlus} there are entire functions of exponential order
$1$ whose zeros cannot be summed up with exponent $1$, e.g. $f(z)=\sin(z)$.
Hence, although the operators
$$
  d\Gamma(B) = \sum_j \frac{1}{j} a^\dagger(f_j)a(\bar e_j),\
  \Delta^+(C) = \sum_j \frac{1}{j} a^\dagger(f_j)a^\dagger(\bar e_j)
$$
look quite similar we only know the first to be unbounded whereas the latter's
unboundedness remains an open problem.

\section*{Acknowledgment}
This work was supported by the research network SFB TR 12 -- `Symmetries and
Universality in Mesoscopic Systems' of the German Research Foundation (DFG).

\bibliographystyle{plain}
\bibliography{literature}
\end{document}